  \pgfplotsset{compat=newest}
\newtheorem{remark}{Remark}
\newtheorem{lemma}{Lemma}
\newtheorem{theorem}{Theorem}
\begin{document}
\title{On the Structure of the Optimal Detector for Sub-THz Multi-Hop Relays with Unknown Prior: Over-the-Air Diffusion
}

\author{Ozgur Ercetin, and Mohaned Chraiti%
\thanks{The authors are with the Faculty of Engineering and Natural Sciences, Sabanci University, Istanbul, Turkey (e-mail: \{oercetin, mohaned.chraiti\}@sabanciuniv.edu).}%
}

\maketitle

\begin{abstract}
Amplify-and-forward (AF) relaying is an viable strategy to extend the coverage of sub-terahertz (sub-THz) links, but inevitably propagates noise, leading to cumulative degradation across multiple hops. At the receiver, optimal decoding is derisible, yet challenging under non-Gaussian input distributions (video, voice, etc), for which neither the Minimum Mean Square Error (MMSE) estimator nor the mutual information admits a closed form. A further open question is whether or not the knowledge of Channel State Information (CSI) and noise statistics at the intermediate relays are necessary for optimal detection. Aiming for optimal decoder, this paper introduces a new framework that interprets the AF relay chain as a variance-preserving diffusion process and employs denoising diffusion implicit models (DDIMs) for signal recovery. We show that each AF hop is mathematically equivalent to a diffusion step with hop-dependent attenuation and noise injection. Consequently, the entire multi-hop chain collapses to an equivalent Gaussian channel fully described by only three real scalars per block: the cumulative complex gain and the effective noise variance. At the receiver, these end-to-end sufficient statistics define a matched reverse schedule that guides the DDIM-based denoiser, enabling near-optimal Bayesian decoding without per-hop CSI. We establish the information-theoretic foundation of this equivalence, proving that decoding performance depends solely on the final effective Signal-to-Noise-Ratio (SNR), regardless of intermediate noise/channel allocation or prior distribution. Simulations under AWGN and Rician fading confirm that the proposed AF–DDIM decoder significantly reduces mean-squared error, symbol error rate, and bit error rate, particularly at moderate SNRs and for higher-order constellations.

\end{abstract}

\begin{IEEEkeywords}
Denoising diffusion probabilistic models, amplify-and-forward relaying, sub-terahertz communication, multi-hop networks, generative signal recovery, low-overhead detection.
\end{IEEEkeywords}

\section{Introduction}
Sub-terahertz (sub-THz) communication, operating in the 0.1--1\,THz spectrum, promises ultra-high data rates necessary for future wireless systems~\cite{rappaport2019wireless,koenig2013wireless}. Yet, the pronounced free-space path loss at these frequencies confines viable transmission to short-range scenarios. In indoor environments—comprising corridors, partitions, and reflective surfaces—such attenuation is compounded by complex multipath effects~\cite{goldsmith2005wireless}. To counteract this, fixed-gain amplify-and-forward (AF) relays have emerged as a pragmatic mechanism for extending link coverage across segmented indoor topologies.

While AF relays offer signal amplification, they inherently propagate additive noise, which accumulates across hops and degrades the end-to-end signal integrity. Passive alternatives, such as reconfigurable intelligent surfaces, while immune to noise amplification, lack the power gain required to counterbalance sub-THz path loss. Consequently, fixed-gain AF remains an indispensable architecture in such high-frequency networks, despite its deleterious noise compounding.

At the receiver, optimal detection is fundamentally achieved by the Bayes-optimal minimum mean-square error (MMSE) estimator. However, in AF multi-hop relaying, this detector requires evaluating the posterior distribution of the source given the noisy observations, which entails marginalization over the input prior. In practical scenarios—such as multimedia transmissions of video or voice—the source distribution is generally non-Gaussian, stochastic, and time-varying. Assuming Gaussian priors is therefore unrealistic, and renders closed-form MMSE estimators or mutual information expressions analytically intractable and computationally prohibitive. Consequently, existing works often resort to suboptimal linear MMSE filters, which presume stationary Gaussian noise and fixed end-to-end channel gains, resulting in significant performance loss under realistic AF chains. A further question is whether knowledge of intermediate-hop channel state information (CSI) or noise statistics could improve detection. As we show later, such intermediate information is redundant once the end-to-end sufficient statistics are known.

Motivated by these limitations, this work introduces a new perspective: multi-hop AF relaying can be interpreted as a physical realization of a forward variance-preserving diffusion process \cite{ho2020denoising,sohl2015deep}. Each hop injects Gaussian noise and attenuation, exactly mirroring a diffusion step. At the receiver, the reverse denoising process can be carried out using a schedule derived solely from end-to-end sufficient statistics, without relying on per-hop CSI or processing at intermediate relays. This interpretation aligns with practical constraints, particularly in fixed-gain deployments \cite{bhardwaj2022fixed}, while enabling a fundamentally new class of receiver designs.

Our work advances sub-THz relay networks through:
\begin{itemize}
    \item {Over-the-Air Diffusion:} We establish a direct correspondence between multi-hop AF relaying and a deterministic diffusion forward process, where each hop injects Gaussian noise and attenuation. This interpretation provides a physically grounded generative channel model.

    \item {End-to-End Sufficient Statistics:} We prove that the AF chain collapses into two statistics—a composite complex gain and an effective noise variance—that are information-equivalent to full CSI. Only three real scalars per block are required for SISO, or per eigenmode after whitening in MIMO systems. This equivalence also allows training and reverse denoising with arbitrary numbers of hops, provided the effective SNR is preserved. 

    \item {Channel-Matched Diffusion Scheduling:} We derive a per-sample noise schedule from the equivalent SNR, ensuring that the reverse denoising process is aligned with the realized propagation conditions, without requiring per-hop CSI at the receiver.

    \item {Low-Overhead Implementation:} We quantify the signaling overhead of sufficient statistics, showing that even with conservative quantization, the data utilization exceeds $95\%$ for modest block sizes. This validates our approach as a practical solution for high-throughput sub-THz relay networks.
\end{itemize}


\section{Related Work}
\label{sec:related_work}

\subsection{Sub–THz Links and Multi–Hop Relaying}
Sub–THz communication promises ultra–high data rates but faces severe path loss, atmospheric absorption, and hardware constraints, which limit link budget and range \cite{rappaport2019wireless,koenig2013wireless}. Multi–hop relaying has therefore been widely investigated as a range–extension mechanism \cite{goldsmith2005wireless,boulogeorgos2018terahertz}. Fixed–gain AF is particularly attractive due to low implementation complexity and modulation transparency, and has been advocated for practical deployments \cite{bhardwaj2022fixed}. Performance studies in sub–THz relay settings further quantify the trade–offs among hop count, fading severity, and end–to–end reliability \cite{pai2022performance}. A persistent limitation, however, is the cumulative noise amplification along AF chains, which complicates receiver design beyond two hops.

Closed–form analyses exist for two–hop AF under canonical fading, characterizing end–to–end SNR distributions, outage probability, ergodic capacity, and error rates \cite{qin2019performance}. Short–packet and finite–blocklength regimes for AF have also been studied, highlighting the sensitivity of error exponents to high–SNR approximations and CSI assumptions \cite{gu2017short}. Fixed–gain AF has been examined in over–the–air computation and aggregation scenarios, showing reliability–complexity trade–offs and the benefits of simple relay processing \cite{tang2020reliable}. Most of these models presuppose Gaussian inputs/noise, stationary channels, and limited hop counts; moreover, many rely on per–hop CSI at the receiver. The question of whether \emph{intermediate} CSI/noise statistics are fundamentally necessary for optimal detection in long AF chains remains open in these formulations.

\subsection{Machine Learning for Detection and End–to–End Design}
Learning–based receivers have demonstrated gains over classical linear/MMSE detectors in various settings, including end–to–end autoencoder designs for AF relays \cite{gupta2022endtoend} and deep detectors for random–traffic MIMO \cite{sun2020machine}. Transformer–style detectors have been proposed for cooperative multi–hop channels in related diffusion–based communication media, reporting BER improvements over baselines \cite{xu2021deep}. While promising, most approaches either assume tractable priors or depend on per–hop CSI during training/inference, which limits their applicability in deep AF chains and non–Gaussian traffic scenarios.

\subsection{Diffusion/Score Models in Wireless}
Diffusion probabilistic models and score–based generative methods provide efficient denoisers by learning reverse–time dynamics from corrupted observations \cite{ho2020denoising,sohl2015deep}. Their effectiveness has been established across modalities such as images and speech \cite{dhariwal2021diffusion,chen2020wavegrad}. In wireless contexts, score/diffusion models have been explored for channel estimation and recovery, including Clustered Delay Line (CDL)–based score training for Multi-Input Multi-Output (MIMO) channel posteriors, robust cascaded Ambient Backscatter Communication estimation via annealed sampling, and diffusion priors for ultra–massive MIMO with pilot reduction \cite{arvinte2023score,rezaei2023ambc,zhou2024highdim}. Large–scale channel datasets have also enabled generative modeling workflows for estimation and prediction \cite{alkhateeb2020deepmimo}. These works, however, do not exploit the structural equivalence between AF relaying and variance–preserving forward diffusion, nor do they replace hop–wise CSI with end–to–end sufficient statistics.

\subsection{Positioning of This Work}
The present paper differs from the above literature along three axes.
First, we establish a rigorous AF–to–diffusion equivalence that holds irrespective of the source prior and hop scheduling, proving that any $H$–hop AF chain collapses to a single Gaussian channel whose mutual information and MMSE behavior depend only on the final effective SNR. This result generalizes classical capacity analyses of AF relaying by showing that intermediate statistics are information–theoretically irrelevant.
Second, we prove that two end–to–end sufficient statistics (a composite complex gain and an effective noise variance) are information–equivalent to full per–hop CSI for detection and reverse–schedule construction, enabling signaling with only three scalars per block in the Single-Input Single-Output (SISO) case or per eigenmode after whitening for MIMO.
Third, we leverage this collapse to design a DDIM–based decoder that approximates the Bayesian optimum without requiring explicit priors or intermediate CSI, thereby addressing cumulative noise in deep AF chains under realistic, non–Gaussian sources.

\section{Multi-hop AF Relaying as Diffusion Process}
\label{sec:af_system}

This section specifies the wireless AF relay channel model 
and establishes its fundamental characterization. 
We first present the input-output signal model, including the relay operations, 
channel coefficients, and noise accumulation across hops. 
We then prove that the multi-hop AF relaying process can be mapped to a diffusion process.

\subsection{Signal Model for AF Relaying}
We consider a source $S$, a destination $D$, and a cascade of $H$ AF relays. Transmission proceeds hop-by-hop in synchronized slots within a coherence block over which channels and noise statistics remain constant. Let $X_t\in\mathbb{C}^d$ denote the complex baseband transmit vector at hop $t$ by the relay $R_{t}$ (with $X_0$ produced by the source). The dimension $d$ collects the degrees of freedom (DoF) per block, e.g., $d=1$ (one complex symbol) for Single-Input Single-Output nodes over a channel. An illustrative figure is depicted in Fig. \ref{fig:network}. 

\begin{figure}[!t]
\centering
\includegraphics[width=1\linewidth]{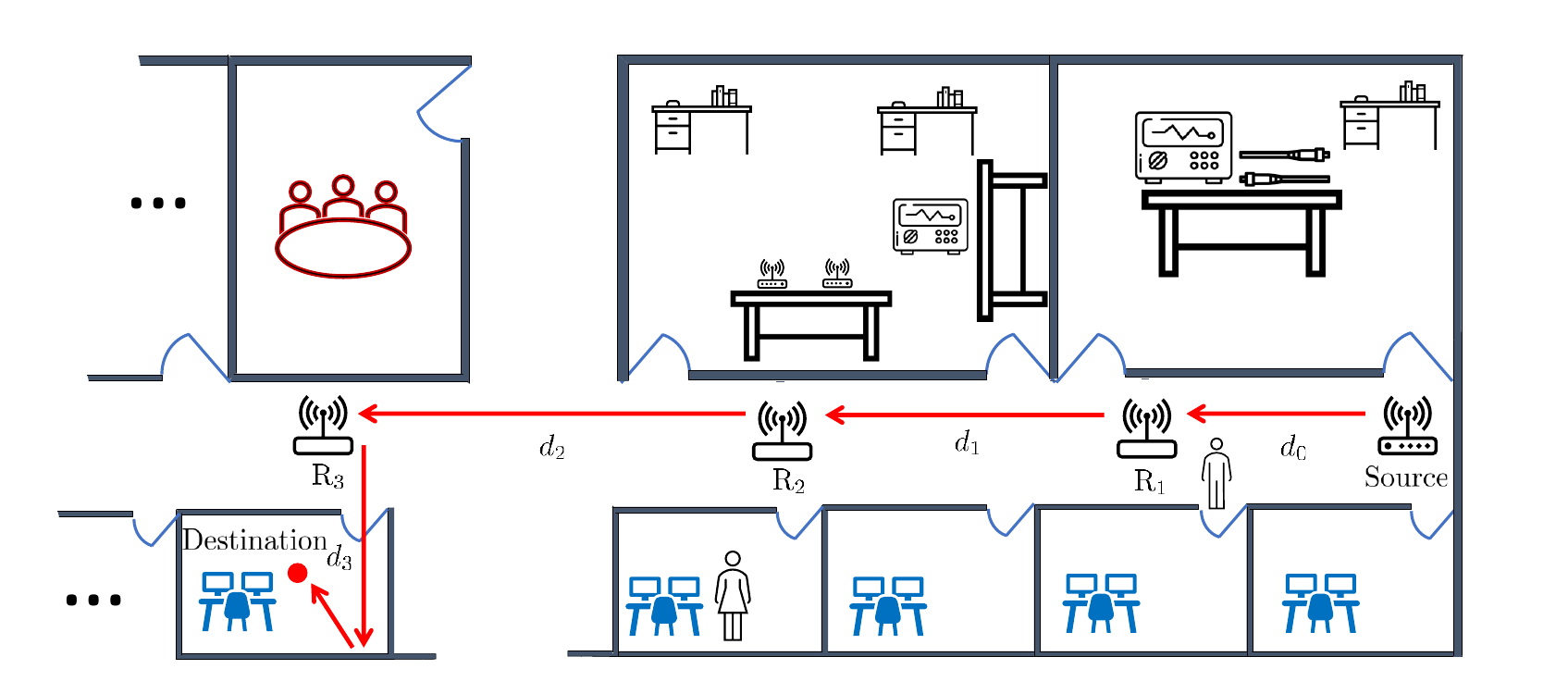}
\caption{Network topology emulating the forward diffusion process in DDPMs.}
\label{fig:network}
\end{figure}

At hop $t \in \{0,\dots,T-1\}$ the relay input vector is
\begin{equation}
    Y_t = \mathcal{H}_t X_t + W_t,
    \label{eq:hop_rx}
\end{equation}
where $\mathcal{H}_t\in\mathbb{C}^{d\times d}$ is the linear channel operator and $W_t$ is zero-mean complex Gaussian noise
\begin{equation}
    W_t \sim \mathcal{CN}(0,\Sigma_t), \qquad \Sigma_t \succ 0.
    \label{eq:noise}
\end{equation}
Each $\mathcal{H}_t$ may include path loss, shadowing, antenna gains, and small-scale fading\footnote{The model accommodates
\textit{i)} Flat fading SISO: $\mathcal{H}_t \in \mathbb{C}$;
    \textit{ii)} Orthogonal Frequency Multiple Access: $\mathcal{H}_t = \mathrm{diag}(h_t[1],\dots,h_t[K])$ after cylic prefix removal/Fast-Fourier-Transform;
    \textit{iii)} Multi-Input-Multi-Output (MIMO)-- OFDMA: block-diagonal $\mathcal{H}_t$ with per-subcarrier MIMO submatrices.}. The noise covariance $\Sigma_t$ captures thermal noise and optional colored components. The second order statistics of the thermal noises $\Sigma_t$ are assumed to be known/estimated.

Relay at $t+1$st hop applies a linear gain $G_t\in\mathbb{C}^{d\times d}$ to its input and transmits
\begin{equation}
    X_{t+1} = G_t Y_t = G_t \mathcal{H}_t X_t + G_t W_t.
    \label{eq:af_tx}
\end{equation}
A per-hop transmit-power constraint is imposed:
\begin{equation}
    \mathbb{E}\|X_{t+1}\|^2 \leq P_{t+1,\max}.
    \label{eq:power_cap}
\end{equation}
On the other hand, we have 
\begin{equation}
\mathbb{E}\|X_{t+1}\|^2 = \mathbb{E}\|G_t \mathcal{H}_t X_t\|^2 +\mathrm{tr}(G_t \Sigma_t G_t^{\!H}),
    \label{eq:af_power_out}
\end{equation}
which must satisfy \eqref{eq:power_cap}.

The per-DoF input SNR of hop $t$ is defined as
\begin{equation}
    \mathrm{SNR}_{\mathrm{in},t} := \frac{ \tfrac{1}{d}\,\mathbb{E}\|\mathcal{H}_t X_t\|^2 }{ \tfrac{1}{d}\,\mathrm{tr}(\Sigma_t) } = \frac{ \mathbb{E}\|\mathcal{H}_t X_t\|^2 }{ \mathrm{tr}(\Sigma_t) }.
    \label{eq:snr_in_def}
\end{equation}
For flat-fading SISO, $\mathrm{SNR}_{\mathrm{in},t}=|h_t|^2 P_t/\sigma_t^2$ with $\Sigma_t=\sigma_t^2$. 

In the case of colored noise where $\Sigma_t\neq \sigma_t^2 I$(a problem that does not occur for the case of SISO system), one can apply a whitening transform:
\begin{align}
    \widetilde{Y}_t &= \Sigma_t^{-1/2} Y_t, \quad \widetilde{H}_t=\Sigma_t^{-1/2} \mathcal{H}_t,\\ \quad \widetilde{W}_t&=\Sigma_t^{-1/2} W_t \sim \mathcal{CN}(0,I_d).
    \label{eq:whiten}
\end{align}
Hence, 
\begin{align}
 \widetilde{X}_{t+1} &= G_t \widetilde{H}_t X_t + G_t \widetilde{W}_t,
    \label{eq:af_compact}
\end{align}
Therefore, for the egneral case, whether noise is white or colored, each relay multiply the received signal by $\Sigma_t^{-1/2}$ as such the result noise is of second order statistics $I_d$, i.e.,  $\widetilde{W}_t\sim\mathcal{CN}(0,I_d)$. For notation simplity, in the rest of the paper we use uniform notation $X_{t}$, $W_t$ and $\mathcal{H}_t$ to denote $\widetilde{X}_t$, $\widetilde{W}_t$ and $\widetilde{W}_t$. 

\subsection{AF Relaying as a Variance-Preserving Diffusion Step}
\label{Sec:ARdiffusion}

We now demonstrate that a multi-hop AF chain realizes exactly the same recursion as the variance-preserving (VP) diffusion model. 
In VP diffusion, the corrupted states evolve according to
\begin{equation}
    X_{t+1} = \alpha_t X_t + \sqrt{1-\alpha_t^2}\,Z_t,
    \qquad Z_t \sim \mathcal{CN}(0,I_d),
    \label{eq:vp_af_equiv}
\end{equation}
where $\alpha_t^2 = 1-\beta_t$ and $\beta_t \in (0,1)$ denotes the diffusion strength at step $t$.

Consider the AF relay model in \eqref{eq:hop_rx}–\eqref{eq:af_tx} with whitening applied so that $W_t \sim \mathcal{CN}(0,I_d)$. The transmit signal at hop $t+1$ is
\begin{equation}
    X_{t+1} = G_t \mathcal{H}_t X_t + G_t W_t.
    \label{eq:af_vp1}
\end{equation}
Let $\mu_t^2=\mathbb{E}\|X_t\|^2$ and define normalized states $\tilde{X}_t=X_t/\mu_t$, which ensures $\mathbb{E}\|\tilde{X}_t\|^2=1$. 
Similarly, denote $\mu_{t+1}^2=\mathbb{E}\|X_{t+1}\|^2$ and write $\tilde{X}_{t+1}=X_{t+1}/\mu_{t+1}$. 
Substituting into \eqref{eq:af_vp1} gives
\begin{equation}
    \tilde{X}_{t+1} = \frac{G_t \mathcal{H}_t \mu_t}{\mu_{t+1}}\,\tilde{X}_t + \frac{G_t}{\mu_{t+1}}\,W_t.
    \label{eq:af_vp2}
\end{equation}

The output power is
\begin{equation}
    \mu_{t+1}^2 = \mathbb{E}\|X_{t+1}\|^2 
    = \|G_t \mathcal{H}_t\|_F^2\,\mu_t^2 + \mathrm{tr}(G_t G_t^{\!H}).
    \label{eq:af_power}
\end{equation}
given that $W_t\sim\mathcal{CN}(0,I_d)$. Here, $\|\cdot\|_F$ denotes the Frobenius norm, i.e.,
$
\|\cdot\|_F = \sqrt{\mathrm{tr}(\cdot (\cdot)^{\!H})}.
$ From \eqref{eq:af_vp2}, the effective coefficient of $\tilde{X}_t$ is therefore
\begin{equation}
    \alpha_t = \frac{\|G_t \mathcal{H}_t\|\mu_t}{\mu_{t+1}},
\end{equation}
and the variance of the normalized noise is
\begin{equation}
    \frac{\|G_t\|_F^2}{\mu_{t+1}^2}.
\end{equation}
Combining with \eqref{eq:af_power}, we obtain
\begin{align}
    \alpha_t^2 
    &= \frac{\|G_t \mathcal{H}_t\|_F^2 \mu_t^2}{\|G_t \mathcal{H}_t\|_F^2 \mu_t^2 + \|G_t\|_F^2} 
     = \frac{\mathrm{SNR}_{\mathrm{in},t}}{1+\mathrm{SNR}_{\mathrm{in},t}}, \label{eq:alpha_snr}
\end{align}
where the per-hop input SNR is defined as
\begin{equation}
    \mathrm{SNR}_{\mathrm{in},t} 
    = \frac{\mathbb{E}\|\mathcal{H}_t X_t\|^2}{\mathrm{tr}(\Sigma_t)}.
\end{equation}
Substituting \eqref{eq:alpha_snr} back into the noise variance expression shows that
\begin{equation}
    \frac{\|G_t\|_F^2}{\mu_{t+1}^2} = 1-\alpha_t^2,
\end{equation}
which identifies the variance of the injected Gaussian noise with the diffusion parameter $\beta_t=1-\alpha_t^2$. 

Thus the normalized recursion \eqref{eq:af_vp2} takes the exact VP form
\begin{equation}
    \tilde{X}_{t+1} = \alpha_t \tilde{X}_t + \sqrt{1-\alpha_t^2}\,Z_t,
    \qquad Z_t\sim\mathcal{CN}(0,I_d).
\end{equation}
To conclude, each AF hop is mathematically identical to a VP diffusion step, with diffusion strength $\beta_t$ determined solely by the input SNR. A cascade of AF relays therefore implements the forward VP diffusion process.

\section{DDPM and the information flow: only the final SNR matters}
While the previous section establishes the analogy between AF relaying and the forward process of DDPM, it is important to recall that DDPM originates from a generative modeling perspective, whereas AF relaying arises in a communications setting. 
To connect these two viewpoints rigorously, an information-theoretic analysis of the diffusion recursion is required. The objective of this section is to show that the diffusion and reverse processes can be determined solely by final effective SNR: the same amount of information is preserved regardless of number of hops and how attenuation/noise are distributed across intermediate hops. Once the cumulative attenuation and effective SNR are known, the detailed channel gains and noise statistics at intermediate relays are immaterial to the fundamental detection limits. An intuitive illustration is as follows: a chain of $100$ hops is 
equivalent—in terms of end-to-end information transfer—to a chain of $10$ hops with noise fraction as long as they yields to the same SNR. As long as both 
scenarios yield the same effective SNR at the destination, they yields the same information-theoretic limits.  

This equivalence has two critical implications for learning and for reverse diffusion. First, the reverse process 
depends only on the effective SNR, not on the actual number of hops or the details noise addition over each of the hops. Consequently, training and reverse processes can be performed with a 
different number of steps (normally smaller) than the number of hops, as long as the cumulative SNR matches. This means that training across a range of SNR levels is sufficient to cover all possible AF relay configurations. Regardless of how the channel varies from hop to hop and over time, only the final SNR matters for efficient reversal of the process and for achieving optimal detection. Second, 
it reduces system identification to a simple task: estimating the end-to-end SNR is sufficient to construct the 
reverse diffusion process.

\subsection{Collapsed Gaussian Channel Representation of Forward Recursions}

Consider a generic noisy recursion that models repeated attenuation and Gaussian perturbation. 
Let $\{Z_t\}_{t\geq 0}$ be i.i.d.\ standard Gaussian vectors in $\mathbb{C}^d$, independent of $X_0$. 
For coefficients $\alpha_t\in[0,1)$, define
\begin{equation}
   X_{t+1} = \alpha_t X_t + \sqrt{1-\alpha_t^2}\,Z_t, \qquad t=0,1,\dots,T-1.
   \label{eq:17}
\end{equation}
Here $\alpha_t$ denotes the attenuation parameter, while $\beta_t=1-\alpha_t^2$ quantifies the injected noise power. 
The cumulative attenuation and the corresponding effective SNR after $H$ steps are given by
\begin{equation}
   \bar\alpha_H = \prod_{t=0}^{T-1}\alpha_t, 
   \qquad \gamma_H = \frac{\bar\alpha_H^2}{1-\bar\alpha_H^2}.
\end{equation}
Although the recursion in \eqref{eq:17} appears to generate a complex sequence of random variables, the following lemma shows that the entire chain collapses into a single Gaussian channel.

\begin{lemma}[Collapsed form]\label{lem:collapsed}
The $H$-step recursion reduces to
\begin{equation}
    X_H = \bar\alpha_H X_0 + \sqrt{1-\bar\alpha_H^2}\,Z_0,
\end{equation}
where $Z_0\sim \mathcal{N}(0,I_d)$ is independent of $X_0$.
\end{lemma}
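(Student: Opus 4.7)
My plan is to prove Lemma~\ref{lem:collapsed} by induction on the number of hops $H$, exploiting the closure of independent isotropic complex Gaussians under affine combination. Structurally this is the standard ``collapsing'' identity for a variance-preserving forward diffusion, so once the variance algebra works out the induction should go through cleanly.

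For the base case $H=1$ I would observe that the stated identity is just the $t=0$ instance of the recursion \eqref{eq:17}, with $\bar\alpha_1 = \alpha_0$; no real work is required beyond renaming. For the inductive step, assume $X_H = \bar\alpha_H X_0 + \sqrt{1-\bar\alpha_H^2}\,\tilde Z_H$ for some $\tilde Z_H \sim \mathcal{CN}(0,I_d)$ independent of $X_0$, and substitute into the $(H{+}1)$-st recursion to obtain
\[
X_{H+1} = \alpha_H \bar\alpha_H X_0 + \bigl(\alpha_H\sqrt{1-\bar\alpha_H^2}\,\tilde Z_H + \sqrt{1-\alpha_H^2}\,Z_H\bigr).
\]
The coefficient of $X_0$ matches $\bar\alpha_{H+1}$ by definition of the cumulative product. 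For the bracketed term, $\tilde Z_H$ and $Z_H$ are independent, zero-mean, isotropic complex Gaussians, so their weighted sum is again $\mathcal{CN}(0,\sigma^2 I_d)$ with variance
\[
\sigma^2 = \alpha_H^2(1-\bar\alpha_H^2) + (1-\alpha_H^2) = 1 - \bar\alpha_H^2 \alpha_H^2 = 1-\bar\alpha_{H+1}^2.
\]
Hence the bracket can be rewritten as $\sqrt{1-\bar\alpha_{H+1}^2}\,\tilde Z_{H+1}$ for a fresh standard Gaussian $\tilde Z_{H+1}$ that is independent of $X_0$ (since both summands are), closing the induction.

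I do not anticipate a genuine obstacle: the entire argument reduces to the single algebraic identity above, which is immediate from $\bar\alpha_{H+1}=\alpha_H\bar\alpha_H$. The only care point worth flagging in the writeup is that the aggregated noise stays independent of $X_0$ at every step; this holds because every $Z_t$ is independent of $X_0$ by the hypotheses preceding \eqref{eq:17}, and independence is preserved under taking sums of independent Gaussians. A minor notational caveat is that the $Z_0$ appearing in the lemma statement is really the \emph{aggregated} end-to-end noise across all $H$ hops, not the first-hop sample of \eqref{eq:17}; I would clarify this relabeling at the start of the proof to avoid confusion.
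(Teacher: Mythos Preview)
Your proposal is correct and follows exactly the same route as the paper's proof: induction on $H$, with the inductive step substituting the hypothesis into the recursion and computing the aggregated noise variance $\alpha_H^2(1-\bar\alpha_H^2)+(1-\alpha_H^2)=1-\bar\alpha_{H+1}^2$. Your version is in fact slightly more careful, since you explicitly flag the independence-from-$X_0$ point and the $Z_0$ relabeling, both of which the paper leaves implicit.
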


\begin{proof}
The claim is immediate for $H=1$. Assume it holds for $H$ and write
\begin{align*}
    X_{T+1} &= \alpha_H X_H + \sqrt{1-\alpha_H^2}\,Z_H \\
            &= \alpha_H \bar\alpha_H X_0 
             + \alpha_H\sqrt{1-\bar\alpha_H^2}\,Z_0
             + \sqrt{1-\alpha_H^2}\,Z_H.
\end{align*}
The last two terms are independent Gaussian vectors with total variance 
$\alpha_H^2(1-\bar\alpha_H^2)+(1-\alpha_H^2)=1-\bar\alpha_{T+1}^2$, 
and independence from $X_0$ is preserved. 
\end{proof}

Lemma~\ref{lem:collapsed} shows that, regardless of the number of steps or how noise is accumulated, the recursion always admits a \emph{single equivalent Gaussian channel} description: the input $X_0$ undergoes fading by $\bar\alpha_H$ and independent Gaussian perturbation of variance $1-\bar\alpha_H^2$. 

While the equivalent channel description captures the end-to-end form of the recursion, it does not by itself determine whether knowledge of the intermediate distributions affects the amount of information preserved. The following theorem resolves this by showing that the mutual information between $X_0$ and $X_H$ depends only on the effective SNR and is independent of the details of the intermediate steps.

\begin{theorem}[Mutual information collapse]\label{thm:MIcollapse}
Let $X_0$ be an arbitrary random vector in $\mathbb{C}^d$ with finite second moment. 
For the recursion in \eqref{eq:17}, the mutual information between $X_0$ and the $H$-step output $X_H$ is given by
\begin{equation}
    I(X_0;X_H) = J(\gamma_H),
\end{equation}
where $\gamma_H=\frac{\bar\alpha_H^2}{1-\bar\alpha_H^2}$ is the effective SNR, and $J(\cdot)$ is a continuous, strictly increasing function on $[0,\infty)$ for every non-degenerate $X_0$. 
In particular, for small SNR,
\begin{equation}
    I(X_0;X_H) = \frac{\gamma_H}{2}\,\mathbb{E}\|X_0\|^2 + o(\gamma_H), 
    \qquad \gamma_H \downarrow 0.
\end{equation}
Hence, the mutual information depends only on the end-to-end SNR $\gamma_H$ and is completely insensitive to the allocation of attenuation and noise across intermediate steps.
\end{theorem}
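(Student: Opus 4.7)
The plan is to apply Lemma~\ref{lem:collapsed} to reduce the $H$-step recursion to a single equivalent scalar-parametrized Gaussian channel, and then invoke classical Gaussian-channel mutual-information properties that depend on the observation only through one SNR parameter. With the collapse lemma in hand, most of the theorem should reduce to known tools.

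First, I would use Lemma~\ref{lem:collapsed} to write $X_H = \bar\alpha_H X_0 + \sqrt{1-\bar\alpha_H^2}\,Z$ with $Z\sim\mathcal{CN}(0,I_d)$ independent of $X_0$. Since $\bar\alpha_H\in(0,1]$, the map $X_H\mapsto X_H/\bar\alpha_H$ is a bijective deterministic transformation and therefore mutual-information preserving, so
\begin{equation*}
I(X_0;X_H)\;=\;I\!\left(X_0;\,X_0+\tfrac{\sqrt{1-\bar\alpha_H^2}}{\bar\alpha_H}\,Z\right).
\end{equation*}
The noise variance in this equivalent channel equals $1/\gamma_H$, so the right-hand side depends on the hop parameters $\{\alpha_t\}$ only through the scalar $\gamma_H$. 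Defining $J(\gamma):=I\!\left(X_0;\,X_0+\gamma^{-1/2}Z\right)$ then yields $I(X_0;X_H)=J(\gamma_H)$ and directly gives the insensitivity claim: any reallocation of the hop coefficients that preserves $\gamma_H$ leaves the mutual information unchanged.

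For the regularity properties of $J$, my plan is to appeal to the I-MMSE identity of Guo, Shamai, and Verd\'{u}, which expresses $J'(\gamma)$ as a positive multiple of the MMSE of $X_0$ given $X_0+\gamma^{-1/2}Z$. Because $X_0$ is non-degenerate with finite second moment, this MMSE is strictly positive and continuous in $\gamma$ on $[0,\infty)$, yielding strict monotonicity and continuity of $J$. The small-SNR expansion then follows by integrating the boundary value $J'(0)=\tfrac{1}{2}\mathbb{E}\|X_0\|^2$ (the MMSE at zero SNR equals the input second moment), giving $J(\gamma)=\tfrac{\gamma}{2}\mathbb{E}\|X_0\|^2+o(\gamma)$.

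The collapse step itself is essentially a corollary of Lemma~\ref{lem:collapsed} and the scaling invariance of mutual information, so the main technical care is in justifying the regularity claims for \emph{arbitrary} finite-second-moment priors, including discrete and mixed distributions typical of digital communications. Rather than reprove these from scratch, I would cite the I-MMSE framework, which handles general priors uniformly; the only verification needed is that the finite-second-moment hypothesis on $X_0$ is consistent with the per-hop power constraints~\eqref{eq:power_cap}, which is immediate.
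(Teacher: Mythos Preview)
Your proposal is correct and follows essentially the same route as the paper: both arguments apply Lemma~\ref{lem:collapsed} to reduce the $H$-step recursion to a single Gaussian channel parametrized by $\gamma_H$, then invoke the Guo--Shamai--Verd\'u I--MMSE identity to obtain strict monotonicity, continuity, and the low-SNR expansion. The only cosmetic difference is the normalization you choose---you scale the observation by $1/\bar\alpha_H$ to obtain a channel $X_0+\gamma^{-1/2}Z$, whereas the paper scales to the equivalent form $\sqrt{\gamma}\,X_0+Z$---but these are related by an invertible map and therefore identical in content.
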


\begin{proof}
By Lemma~\ref{lem:collapsed}, the recursion reduces to
\[
X_H \stackrel{d}{=} \bar\alpha_H X_0 + \sqrt{1-\bar\alpha_H^2}\,Z_0,
\]
with $Z_0\sim\mathcal{N}(0,I_d)$ independent of $X_0$. 
Equivalently,
\[
X_H \stackrel{d}{=} \sqrt{\tfrac{\gamma_H}{1+\gamma_H}}\,X_0 + \tfrac{1}{\sqrt{1+\gamma_H}}\,Z_0,
\]
which is exactly the output of a real/complex Gaussian channel $Y_\gamma=\sqrt{\gamma}\,X_0+Z_0$ with $\gamma=\gamma_H$. 
Therefore,
\[
I(X_0;X_H)=I(X_0;Y_{\gamma_H}).
\]

The I--MMSE identity of Guo--Shamai--Verdú \cite{Guo2005I_MMSE} states that
\[
\frac{d}{d\gamma} I(X_0;Y_\gamma) = \tfrac{1}{2}\,\mathrm{mmse}_{X_0}(\gamma),
\]
where $\mathrm{mmse}_{X_0}(\gamma)=\mathbb{E}\!\big[\|X_0-\mathbb{E}[X_0\mid Y_\gamma]\|^2\big]$ is the minimum mean-square error in estimating $X_0$ from $Y_\gamma$. 
Since $\mathrm{mmse}_{X_0}(\gamma)>0$ for all $\gamma>0$ when $X_0$ is non-degenerate, the function $\gamma\mapsto I(X_0;Y_\gamma)$ is strictly increasing and continuous. 
Furthermore, expanding around $\gamma=0$ yields
\[
I(X_0;Y_\gamma)=\frac{\gamma}{2}\,\mathbb{E}\|X_0\|^2+o(\gamma),
\]
which gives the stated asymptotic behavior.

The key observation is that $\gamma_H$ depends on the product $\prod_{t=0}^{T-1}\alpha_t$ only through the scalar $\bar\alpha_H^2$, and thus two different recursion schedules $\{\alpha_t\}$ and $\{\alpha'_t\}$ that yield the same $\bar\alpha_H$ (equivalently the same $\gamma_H$) produce the same mutual information. 
Hence, the entire information-theoretic behavior of the recursion is captured by the end-to-end SNR $\gamma_H$, and the precise distribution of attenuation and noise across intermediate steps is immaterial.
\end{proof}

\noindent\textbf{Discussion:} 
Theorem~\ref{thm:MIcollapse} establishes that the mutual information depends solely on the end-to-end SNR $\gamma_H$. 
This is a strong conclusion: the detailed schedule $\{\alpha_t\}$ and the manner in which noise is accumulated across steps are irrelevant. 
From the standpoint of detection limits, only the final effective SNR matters. 
In the context of AF relaying, this means that a long chain of many noisy hops is indistinguishable from a shorter chain with appropriately matched SNR. 
Consequently, the reverse diffusion process can be trained and executed independently of the specific AF setup and the instantaneous channel state information. In particular, it may employ fewer and coarser steps than the actual number of relays, provided that the cumulative SNR is preserved. This observation eliminates the need to retrain the DDPM and relaxes the requirement for knowledge of intermediate relay states. 

\subsection{Optimal Power Allocation and Scheduling}

Although the end-to-end information transfer depends only on the final effective SNR, 
the distribution of power across relays determines how this SNR is achieved. 
Different power allocation strategies can therefore influence both the efficiency 
of resource usage and the stability of the reverse diffusion process, even under 
a fixed total relay power budget. We next characterize the optimal allocation.

\begin{theorem}[Optimal Power Allocation Across Relays]\label{thm:equalization}
Consider the recursion with $\alpha_t^2 = 1-\beta_t$ and
\[
\beta_t = \frac{1}{1+\mathrm{SNR}_{\mathrm{in},t}}
= \frac{1}{1+c_t P_t}, 
\]
where $P_t\in[0,P_{t,\max}]$ is the transmit power of relay $t$, 
and $c_t=\frac{|h_t|^2}{\sigma_t^2}>0$ is the channel-to-noise ratio parameter at hop $t$. 
Then
\[
\bar\alpha_H^2=\prod_{t=0}^{T-1}(1-\beta_t)
=\prod_{t=0}^{T-1}\frac{c_tP_t}{1+c_tP_t}.
\]

Given a total power budget $P_{\mathrm{tot}}>0$, the optimal allocation is obtained by solving
\begin{align}\label{eq:opt}
&\max_{\{P_t\}}\ \sum_{t=0}^{T-1}\log\frac{c_tP_t}{1+c_tP_t}\\
&\quad \text{s.t.}\quad
\sum_{t=0}^{T-1} P_t \le P_{\mathrm{tot}},\qquad 
0 \le P_t \le P_{t,\max}.
\nonumber
\end{align}
This optimization is strictly concave and admits a unique solution $\{P_t^\star\}$. 
For each interior variable $0<P_t^\star<P_{t,\max}$, the Karush--Kuhn--Tucker (KKT) 
conditions reduce to the equal-marginal rule
\[
\frac{1}{P_t^\star(1+c_tP_t^\star)}=\mu,
\]
for some multiplier $\mu>0$ chosen so that the total budget is met. 
\end{theorem}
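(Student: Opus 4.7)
My plan is to handle the three assertions of the theorem separately: (i) the closed form for $\bar\alpha_H^2$, (ii) strict concavity of the objective (whence existence and uniqueness of $\{P_t^\star\}$), and (iii) the equal-marginal KKT characterization of the interior optimum.

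\emph{Step 1 (closed form).} I would start by directly substituting the given $\beta_t = 1/(1+c_t P_t)$ into $1-\beta_t = c_t P_t/(1+c_t P_t)$. Then, from Lemma~\ref{lem:collapsed} and the definition $\alpha_t^2 = 1-\beta_t$, the cumulative attenuation is $\bar\alpha_H^2 = \prod_{t=0}^{T-1}(1-\beta_t) = \prod_{t=0}^{T-1} c_t P_t/(1+c_t P_t)$. Taking logarithms converts maximization of the end-to-end SNR $\gamma_H = \bar\alpha_H^2/(1-\bar\alpha_H^2)$ (a strictly monotone transformation of $\bar\alpha_H^2$) into maximization of the separable sum in \eqref{eq:opt}. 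This is a short, mechanical step.

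\emph{Step 2 (strict concavity and uniqueness).} For each $t$, let $f_t(P) = \log\!\big(c_t P/(1+c_t P)\big) = \log(c_t P) - \log(1+c_t P)$ on $(0, P_{t,\max}]$. Differentiating gives $f_t'(P) = 1/P - c_t/(1+c_t P) = 1/[P(1+c_t P)]$, and then $f_t''(P) = -(1+2c_t P)/[P^2(1+c_t P)^2] < 0$ for $P > 0$. Hence each $f_t$ is strictly concave, and the sum $F(\mathbf{P}) = \sum_t f_t(P_t)$ is strictly concave on the relative interior of the feasible set, which is the intersection of the box $\prod_t [0,P_{t,\max}]$ with the budget halfspace $\sum_t P_t \le P_{\mathrm{tot}}$, i.e., a non-empty compact convex polytope. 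A continuous strictly concave function on a compact convex set attains a unique maximum; this handles existence and uniqueness of $\{P_t^\star\}$. (One minor technical point: if $P_t^\star = 0$ then $f_t\to -\infty$, so the optimum necessarily lies in the region where all $P_t^\star > 0$, which also justifies ignoring the lower bound as an active constraint.)

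\emph{Step 3 (KKT reduction).} Since the objective is concave and the constraints are affine, Slater's condition holds and the KKT conditions are necessary and sufficient. Forming the Lagrangian
\begin{equation}
\mathcal{L}(\mathbf{P},\mu,\{\nu_t\}) = \sum_{t=0}^{T-1} f_t(P_t) - \mu\!\left(\sum_{t=0}^{T-1} P_t - P_{\mathrm{tot}}\right) - \sum_{t=0}^{T-1}\nu_t(P_t - P_{t,\max}),
\end{equation}
stationarity gives $f_t'(P_t^\star) = \mu + \nu_t$ with complementary slackness $\nu_t(P_t^\star - P_{t,\max}) = 0$ and $\nu_t \ge 0$. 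For an interior variable, i.e., $0 < P_t^\star < P_{t,\max}$, we have $\nu_t = 0$, and substituting $f_t'(P_t^\star) = 1/[P_t^\star(1+c_t P_t^\star)]$ yields exactly the equal-marginal rule stated in the theorem. The multiplier $\mu > 0$ is determined by the budget equation $\sum_t P_t^\star = P_{\mathrm{tot}}$, which is binding because $F$ is strictly increasing in each $P_t$ on $(0,P_{t,\max}]$ (since $f_t' > 0$).

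\emph{Anticipated obstacle.} The only non-routine point is justifying that the optimum is interior in each coordinate (so the equal-marginal rule really applies), and that, although $f_t' \to \infty$ as $P_t \downarrow 0$ guarantees $P_t^\star > 0$, clipping by $P_{t,\max}$ can be active for some $t$; those coordinates simply contribute $\nu_t = f_t'(P_{t,\max}) - \mu \ge 0$ and are excluded from the equal-marginal relation, as already acknowledged in the statement.
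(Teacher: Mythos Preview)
Your proposal is correct and follows essentially the same route as the paper: compute $f_t'(P)=1/[P(1+c_tP)]$ and $f_t''(P)=-(1+2c_tP)/[P^2(1+c_tP)^2]<0$ to get strict concavity, then read off the equal-marginal rule from the KKT stationarity condition with the upper-bound multiplier vanishing at interior points. The only cosmetic differences are that the paper carries an explicit lower-bound multiplier (which you instead dispose of via $f_t\to-\infty$ at $P_t=0$), and the paper goes one step further to solve the stationarity condition as a quadratic and display $P_t^\star=\big(\sqrt{1+4c_t/\mu}-1\big)/(2c_t)$, which is not part of the theorem statement anyway.
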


\begin{proof}
Define
\[
g_t(P) = \log\frac{c_tP}{1+c_tP}, \qquad P>0,\ c_t>0.
\]
We first verify concavity. The first and second derivatives are
\[
g_t'(P) = \frac{1}{P(1+c_tP)} > 0, \qquad
g_t''(P) = -\frac{1+2c_tP}{P^2(1+c_tP)^2} < 0.
\]
Hence $g_t(P)$ is strictly increasing and strictly concave on $(0,\infty)$. 
It follows that the sum $\sum_t g_t(P_t)$ is strictly concave and that 
the feasible set in \eqref{eq:opt} is convex. Therefore, the problem admits 
a unique optimizer.

Next, consider the Lagrangian
\[
\begin{aligned}
\mathcal{L} &= \sum_{t=0}^{T-1} g_t(P_t) 
- \mu\Bigl(\sum_{t=0}^{T-1} P_t - P_{\mathrm{tot}}\Bigr)\\
&\qquad+ \sum_{t=0}^{T-1} \nu_t P_t 
+ \sum_{t=0}^{T-1} \omega_t (P_{t,\max}-P_t),
\end{aligned}
\]
where $\mu\ge0$ is the multiplier for the total power constraint, 
and $\nu_t,\omega_t\ge0$ are multipliers for the box constraints.

The KKT stationarity condition yields
\[
g_t'(P_t^\star) = \mu - \nu_t + \omega_t.
\]
For any interior solution $0<P_t^\star<P_{t,\max}$, the complementary slackness 
conditions imply $\nu_t=\omega_t=0$, and therefore
\[
g_t'(P_t^\star) = \mu.
\]
Substituting the explicit form of $g_t'(P)$ gives
\[
\frac{1}{P_t^\star(1+c_tP_t^\star)} = \mu,
\]
which can be rearranged as
\[
P_t^\star(1+c_tP_t^\star) = \frac{1}{\mu}.
\]

This quadratic equation in $P_t^\star$ has the positive solution
\[
P_t^\star = \frac{\sqrt{1+\tfrac{4c_t}{\mu}}-1}{2c_t}.
\]
Boundary cases $P_t^\star=0$ or $P_t^\star=P_{t,\max}$ follow from the 
corresponding slackness conditions. Finally, since $g_t'(P)>0$, the 
budget constraint is tight unless prevented by the upper bounds 
$P_{t,\max}$. 
\end{proof}

\begin{remark}[Equalization Principle]
If $c_t\equiv c$ and no upper bounds are active, then 
$P_t^\star=P_{\mathrm{tot}}/T$ for all $t$, so that 
$s_t^\star=cP_t^\star$ and $\beta_t^\star=(1+s_t^\star)^{-1}$ 
are equalized across hops. 
In general, however, the optimizer does not equalize $\beta_t$; 
it equalizes the marginal gain $g_t'(P_t)=1/[P_t(1+c_tP_t)]$ 
across active relays.
\end{remark}

The optimization in Theorem~\ref{thm:equalization} provides the \emph{power allocation rule across hops}. 
While the collapsed mutual information depends only on $\bar\alpha_H$, the \emph{learnability and stability} of the reverse process depend on how the attenuation is distributed. 
This is directly relevant for AF relay networks: equalizing the marginal benefit across hops ensures efficient use of total power and leads to smoother reverse denoisers, improving practical recovery performance.

\subsection{Feasibility of Reliable Communication under Arbitrary Priors}
\label{sec:feasibility}

The AF--VP equivalence shows that after $H$ hops the received state is
\begin{equation}
    X_H = \bar\alpha_H X_0 + \sqrt{1-\bar\alpha_H^2}\,Z,
    \qquad Z\sim \mathcal{CN}(0,I_d),
    \label{eq:collapsed_siso}
\end{equation}
where $\bar\alpha_H=\prod_{t=0}^{T-1}\alpha_t$. 
Thus the end-to-end channel is equivalent to a memoryless Gaussian channel with effective SNR
\begin{equation}
    \mathrm{SNR}_{\mathrm{eff}}
    = \frac{\bar\alpha_H^2 \, \mathbb{E}\|X_0\|^2}{1-\bar\alpha_H^2}.
    \label{eq:eff_snr}
\end{equation}
The feasibility of reliable communication is therefore determined solely by $\mathrm{SNR}_{\mathrm{eff}}$ and the prior distribution of $X_0$.

\subsubsection{Gaussian and Discrete Priors}

If $X_0 \sim \mathcal{CN}(0,P I_d)$, mutual information admits the closed form
\begin{equation}
    I(X_0;X_H) = \log \det \!\big(I_d + \mathrm{SNR}_{\mathrm{eff}} I_d\big),
    \label{eq:gaussian_capacity}
\end{equation}
so the feasibility condition is $R \le I(X_0;X_H)$.  
If this bound is violated, no decoder can achieve reliable communication.  

For discrete constellations such as $M$ Quadratic Amplitude Modulation (QAM), standard symbol error rate approximations take the form
\begin{equation}
    \mathrm{SER} \approx f_M(\mathrm{SNR}_{\mathrm{eff}}),
\end{equation}
but these formulas assume uniform symbol priors. In practice, $P(X_0)$ may be non-uniform due to coding, probabilistic shaping, or machine-learned mappings, which invalidates closed-form SER thresholds.

\subsubsection{General Information-Theoretic Characterization}

For arbitrary priors $P_{X_0}$ with finite second moment, feasibility is still fully characterized by the mutual information
\begin{equation}
    I(X_0;X_H) = \mathbb{E}\!\left[
        \log \frac{p(X_H \mid X_0)}{p(X_H)}
    \right],
\end{equation}
where $p(X_H \mid X_0)$ is Gaussian with variance $1-\bar\alpha_H^2$.  

The I--MMSE identity \cite{Guo2005I_MMSE} provides a rigorous characterization:
\begin{equation}
    \frac{d}{d\gamma} I(X_0;Y_\gamma)
    = \tfrac{1}{2}\,\mathrm{mmse}_{X_0}(\gamma),
    \qquad Y_\gamma = \sqrt{\gamma}\,X_0+Z,
    \label{eq:immse}
\end{equation}
with
\begin{equation}
    \mathrm{mmse}_{X_0}(\gamma)
    = \mathbb{E}\!\Big[\,
        \|X_0-\mathbb{E}[X_0 \mid Y_\gamma]\|^2
    \,\Big].
\end{equation}
Thus,
\begin{equation}
    I(X_0;X_H) = \tfrac{1}{2}\int_0^{\mathrm{SNR}_{\mathrm{eff}}} 
    \mathrm{mmse}_{X_0}(\gamma)\, d\gamma.
\end{equation}
Eq~\eqref{eq:immse} shows that feasibility is governed entirely by the prior-dependent MMSE curve. 
Closed forms exist only for Gaussian priors; nevertheless, the limits are uniquely determined by $\mathrm{SNR}_{\mathrm{eff}}$ for any distribution.

\subsection{Implications for AF--Diffusion Detection}

Classical feasibility thresholds (e.g., QAM SER formulas) are thus special cases of the general I--MMSE framework under restrictive assumptions.  
In realistic scenarios, the prior $P_{X_0}$ is structured, non-uniform, or unknown, making closed-form thresholds inapplicable.  

In this regime, the optimal detector is the Bayesian estimator
\begin{equation}
    \hat{X}_0 = \mathbb{E}[X_0 \mid X_H],
\end{equation}
which requires exact knowledge of $P_{X_0}$. Diffusion-based generative models (DDPM, DDIM) provide a universal approximation mechanism: by learning the score functions $\nabla \log p_t(x)$ of the corrupted states, the reverse process implements near-optimal Bayesian denoising without requiring explicit priors.  
Hence, diffusion models act as universal decoders: for Gaussian priors they reduce to the MMSE detector in closed form, and for arbitrary or unknown priors they approximate the Bayesian optimum, bypassing the need for prior-specific feasibility thresholds.

\section{Implementation and Numerical Results}
\subsection{Implementation and Practical Considerations}
\label{sec:implementation}

The AF--VP equivalence has direct impact on receiver design. 
The collapsed channel after $H$ hops is 
\[
X_H = \mu_H X_0 + \sqrt{v_H}\,Z,\qquad Z\sim\mathcal{CN}(0,I),
\]
where $(\mu_H,v_H)$ are the cumulative gain and noise variance. 
This description is sufficient for both information-theoretic analysis 
and practical reverse denoising.

\subsubsection{Overhead}

Instead of forwarding full per-hop channel state information, only three real scalars 
$(\Re\{\mu_H\},\Im\{\mu_H\},v_H)$ are needed per block. 
This makes the overhead negligible: for an $N\times N$ QAM block with 
$k=\log_2 M$ bits/symbol, the utilization factor is
\[
\eta=\frac{N^2 k}{N^2 k + B_{\text{CSI}}},\qquad 
B_{\text{CSI}}=b_{\Re\mu}+b_{\Im\mu}+b_v.
\]
With moderate quantization ($B_{\text{CSI}}=80$--$96$ bits), 
$\eta\ge0.95$ once $N\ge 20$ for $M=16$ or $N\ge 12$ for $M=256$. 
Thus end-to-end sufficiency yields large savings relative to hop-wise signaling.

\begin{algorithm}[t]
\caption{AF--DDIM Detection}
\label{alg:af_ddim}
\KwIn{Received block $X_H$, sufficient statistics $(\mu_H, v_H)$, reverse steps $H$, trained denoiser $\varepsilon_\theta(\cdot)$.}
\KwOut{Estimate $\widehat X_0$.}

\textbf{Equalize:} $x_H \leftarrow X_H / \mu_H$.  \tcp*{$\tilde X_H$}

\textbf{Schedule from end-to-end stats:}
\;
$\alpha_{\mathrm{bar}}^2 \leftarrow \dfrac{|\mu_H|^2}{|\mu_H|^2 + v_H}$; \quad
$\alpha_{\mathrm{bar}} \leftarrow \sqrt{\alpha_{\mathrm{bar}}^2}$; \\
$\alpha_{\mathrm{step}} \leftarrow \alpha_{\mathrm{bar}}^{1/T}$; \quad
$\beta_{\mathrm{step}} \leftarrow 1 - \alpha_{\mathrm{step}}^2$; \\
\For{$t \leftarrow 0$ \KwTo $H$}{
  $\bar\alpha_t \leftarrow \alpha_{\mathrm{step}}^t$; \quad
  $\sigma_t \leftarrow \sqrt{1 - \bar\alpha_t}$; \quad
  $\lambda_t \leftarrow \log\!\left(\dfrac{\bar\alpha_t^2}{1 - \bar\alpha_t^2}\right)$ \tcp*{log-SNR (optional embedding)}
}

\textbf{Reverse denoising (DDIM, deterministic):}\\
\SetKw{KwDownTo}{downto}
\For{$t \leftarrow T$ \KwDownTo $1$}{
  $e_t \leftarrow \varepsilon_\theta\!\big(x_t,\ t,\ \lambda_t,\ \mu_H,\ v_H\big)$ \tcp*{$\varepsilon$-prediction network}
  $x^{(t)}_0 \leftarrow \dfrac{x_t - \sigma_t\, e_t}{\sqrt{\bar\alpha_t}}$ \tcp*{predicted clean at step $t$}
  $x_{t-1} \leftarrow \sqrt{\bar\alpha_{t-1}}\, x^{(t)}_0 \;+\;
     \sqrt{1 - \bar\alpha_{t-1}}\;\dfrac{x_t - \sqrt{\bar\alpha_t}\,x^{(t)}_0}{\sigma_t}$ \;
}
\Return{$\widehat X_0 \leftarrow x_0$}
\end{algorithm}

\subsubsection{Reverse Process from End-to-End SNR}

From $(\mu_H,v_H)$, the effective SNR is
\[
\mathrm{SNR}_{\mathrm{eq}}=\frac{|\mu_H|^2}{v_H}, 
\qquad 
\bar\alpha^2=\frac{\mathrm{SNR}_{\mathrm{eq}}}{1+\mathrm{SNR}_{\mathrm{eq}}}.
\]
A matched reverse schedule is then constructed as
\[
\bar\alpha_t=\bar\alpha^{\,t/T},\quad
\sigma_t=\sqrt{1-\bar\alpha_t},
\]
for $t=0,\ldots,T$, with $H$ reverse iterations. 
This removes the need for per-hop CSI: the reverse recursion depends only 
on $(\mu_H,v_H)$ and the step index.

Equalization by $\mu_H$ produces $\tilde X_H=X_H/\mu_H$, which initializes the reverse chain. 
The DDIM-style recursion is
\begin{align}
x_0^{(t)} &= \frac{x_t - \sigma_t \,\widehat\varepsilon_t}{\sqrt{\bar\alpha_t}},\\
x_{t-1} &= \sqrt{\bar\alpha_{t-1}}\,x_0^{(t)} 
+ \sqrt{1-\bar\alpha_{t-1}}\,
\frac{x_t-\sqrt{\bar\alpha_t}\,x_0^{(t)}}{\sigma_t},
\end{align}
with $\widehat\varepsilon_t=\varepsilon_\theta(x_t,t,\lambda_t,\mu_H,v_H)$, 
and $\lambda_t=\log(\bar\alpha_t^2/(1-\bar\alpha_t^2))$ as log-SNR embedding.

\subsubsection{Learning-Based Denoiser}

The denoiser $\varepsilon_\theta$ approximates the Bayesian MMSE estimator 
$\hat X_0=\mathbb{E}[X_0\mid X_H]$ by predicting the injected Gaussian noise. 
Training data are generated by propagating constellation symbols through 
random AF chains, recording $(X_H,\mu_H,v_H)$ across diverse hop counts, 
fading realizations, and power allocations. 
The loss is the standard diffusion objective:
\[
\mathcal{L}(\theta)=\mathbb{E}\left\|\varepsilon_\theta(X_H,t,\mu_H,v_H)-Z\right\|^2,
\]
where $Z$ is the known Gaussian corruption. 
Since the AF chain collapses to $(\mu_H,v_H)$, 
the denoiser requires no access to intermediate CSI. 

In our implementation, the denoiser operates on I/Q blocks as two-channel images using a compact 2D U-Net. 
At each reverse step, it receives the state $x_t \in \mathbb{R}^{2\times N\times N}$, the step index $t$, and a scalar noise-level $\ell_t$ (e.g., $\ell_t=\log\!\tfrac{\bar\alpha_t}{1-\bar\alpha_t}$). 
A sinusoidal embedding of $t$ and an MLP embedding of $\ell_t$ are fused to modulate the residual blocks via FiLM-style additive biases. 
The U-Net uses two downsampling and two upsampling stages with base width $C_0{=}64$, GroupNorm, SiLU activations, and skip connections. 
The network outputs the predicted noise $\varepsilon_\theta(x_t,t,\ell_t)$, which is used in the DDIM update to form $x_{t-1}$.




\begin{figure*}[!t]
    \centering
    \subfloat[MSE]{%
        \includegraphics[width=0.32\textwidth]{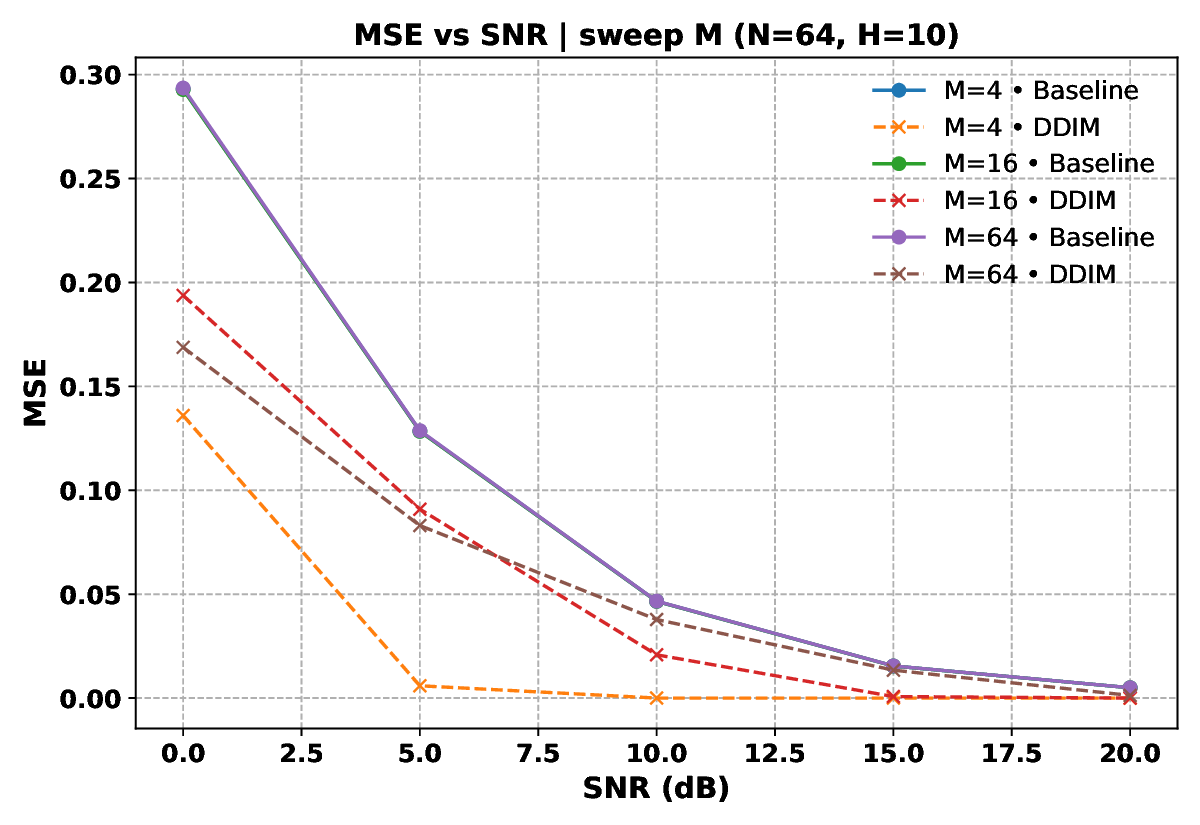}%
        \label{fig:mse_n64_h10}}
    \hfill
    \subfloat[SER]{%
        \includegraphics[width=0.32\textwidth]{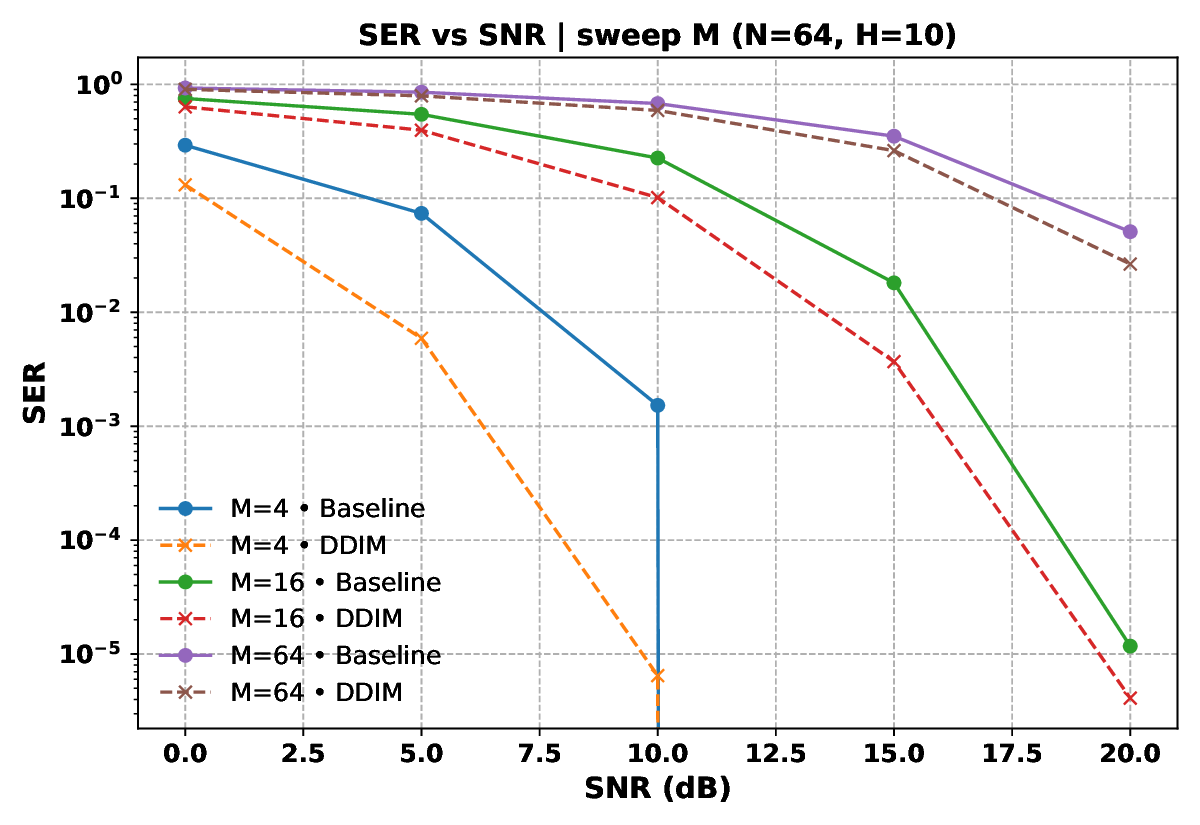}%
        \label{fig:ser_n64_h10}}
    \hfill
    \subfloat[BER]{%
        \includegraphics[width=0.32\textwidth]{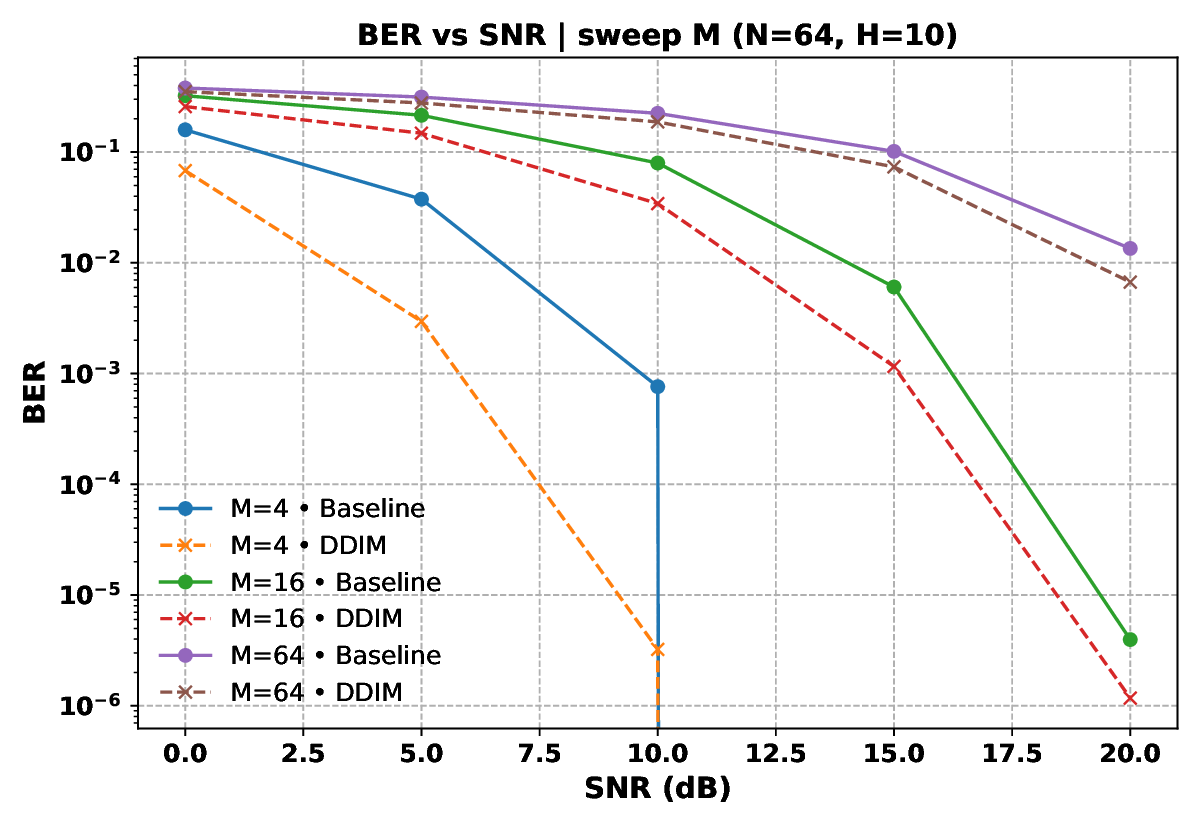}%
        \label{fig:ber_n64_h10}}
    \caption{Performance under AWGN-only propagation (no fading) for a block size of $N=64$ and $H=10$ hops. 
Curves compare baseline direct detection (no denoising) with AF--DDIM reconstruction for modulation orders $M\in\{4,16,64\}$: 
(a) mean-squared error (MSE), (b) symbol error rate (SER), and (c) bit error rate (BER). 
Results show that AF--DDIM consistently reduces reconstruction error and improves detection accuracy, with pronounced gains for higher-order constellations and moderate SNR ranges.}
    \label{fig:perf_vs_M_n64_h10}
\end{figure*}

Figure~\ref{fig:perf_vs_N_m16_h10} evaluates the effect of coherence block size $N$ under AWGN-only propagation with $M=16$ and $H=10$ hops. 
In Fig.~\ref{fig:mse_m16_h10}, larger $N$ improves performance across the SNR range, reflecting greater symbol diversity per block and enhanced stability of the matched diffusion schedule. 
The symbol error rate (Fig.~\ref{fig:ser_m16_h10}) and bit error rate (Fig.~\ref{fig:ber_m16_h10}) likewise improve with increasing $N$. 
These trends confirm that the AF--DDIM framework remains effective even for modest block sizes under AWGN noise accumulation.

\begin{figure*}[!t]
    \centering
    \subfloat[MSE]{%
        \includegraphics[width=0.32\textwidth]{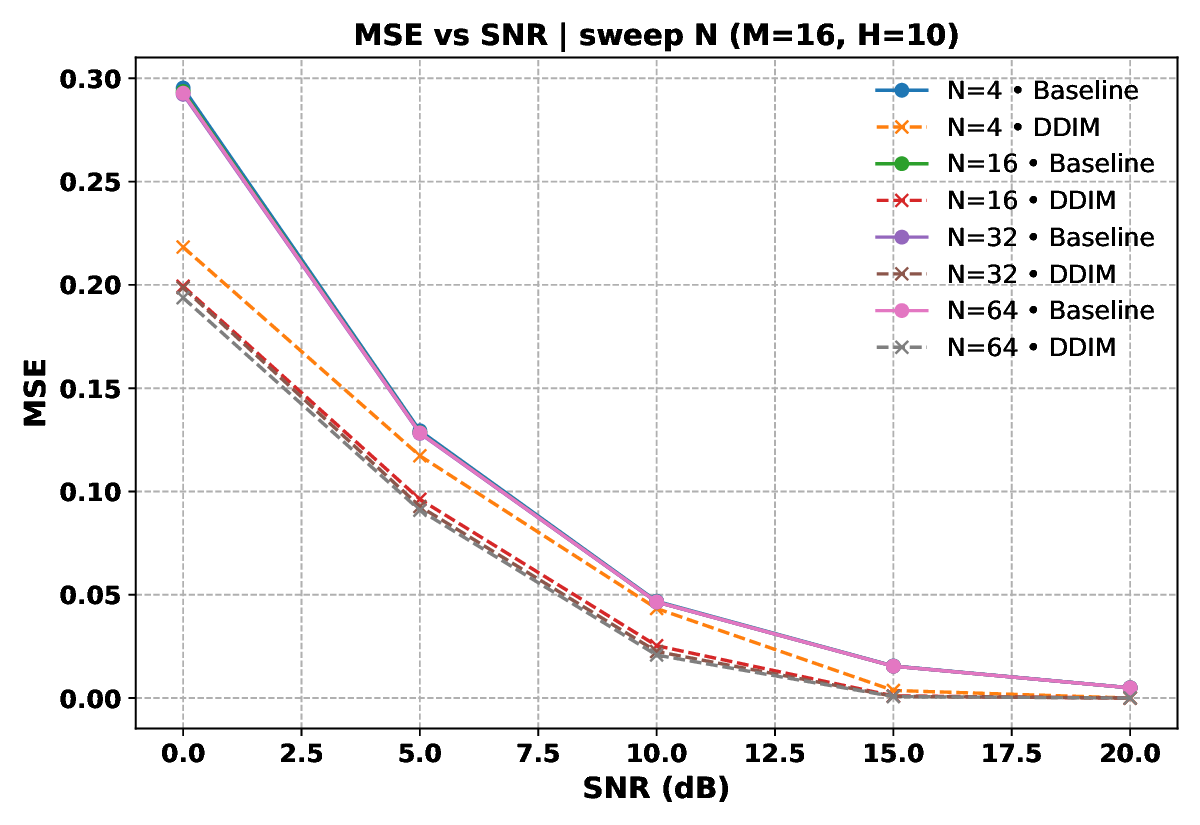}%
        \label{fig:mse_m16_h10}}
    \hfill
    \subfloat[SER]{%
        \includegraphics[width=0.32\textwidth]{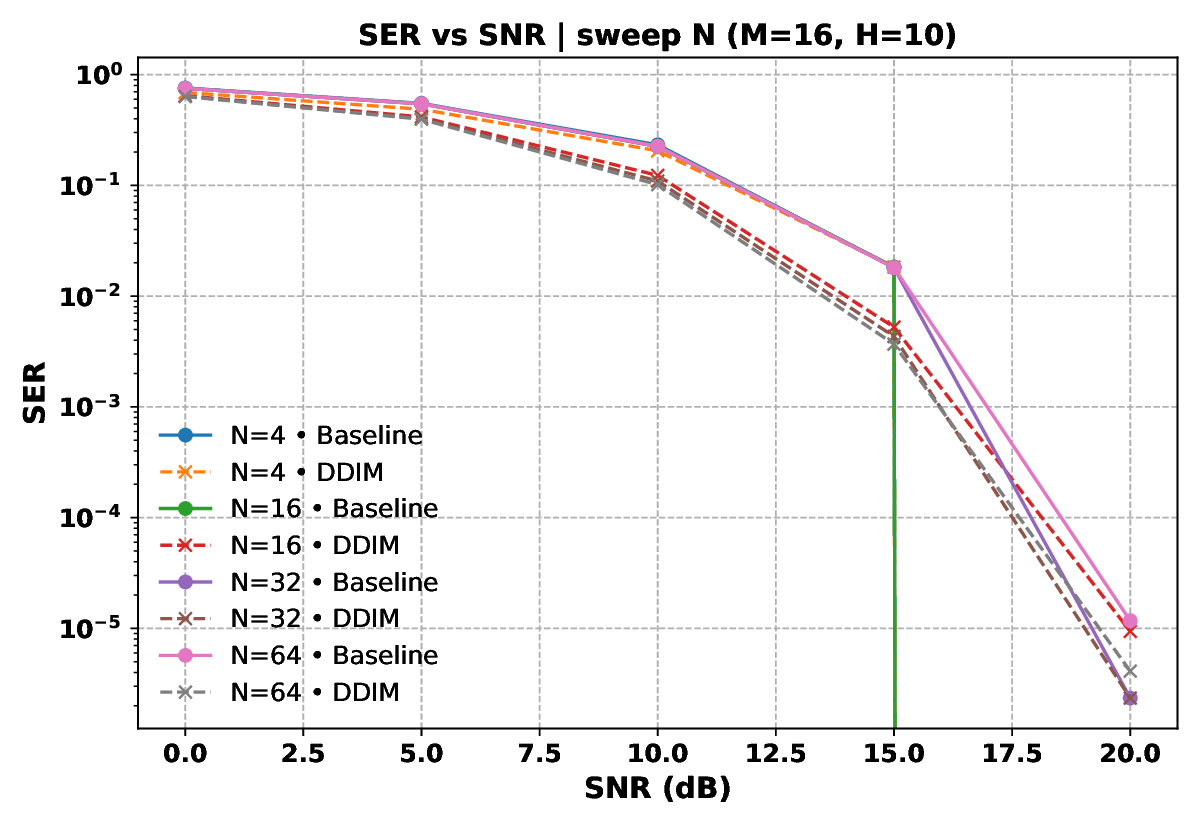}%
        \label{fig:ser_m16_h10}}
    \hfill
    \subfloat[BER]{%
        \includegraphics[width=0.32\textwidth]{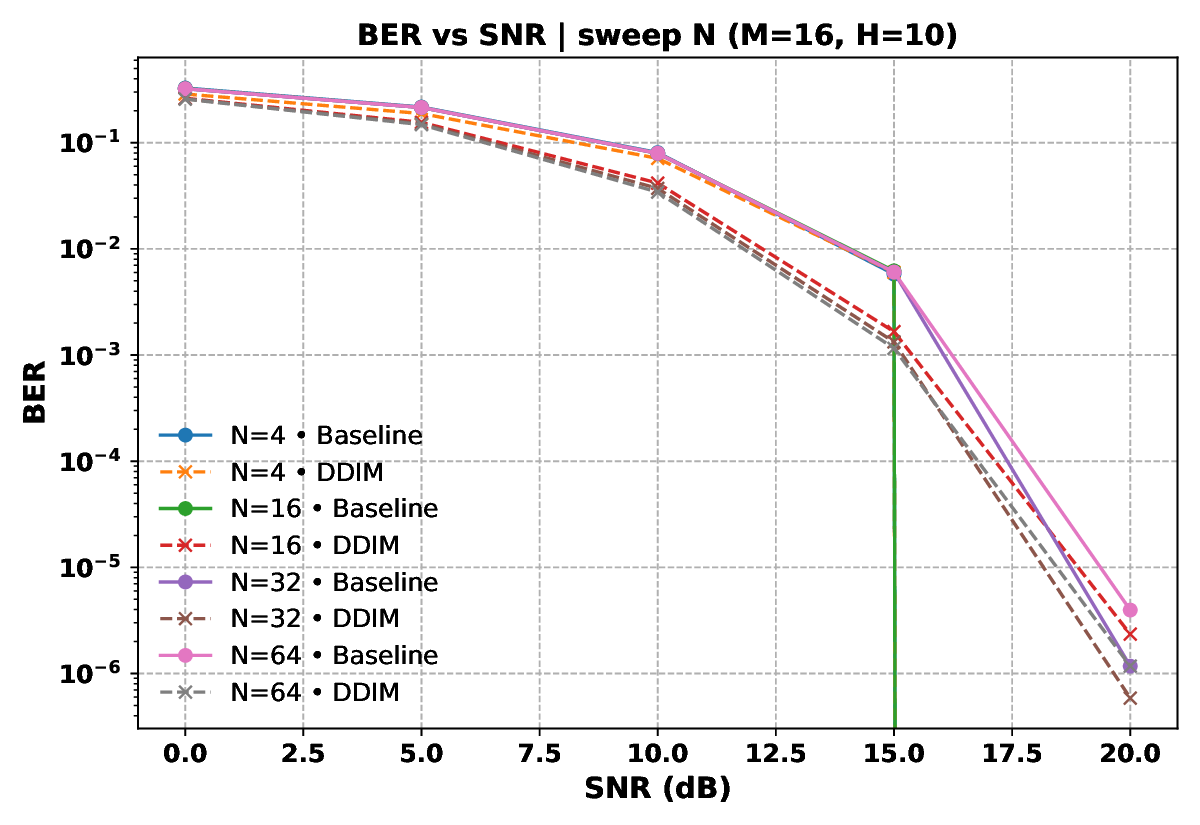}%
        \label{fig:ber_m16_h10}}
    \caption{Performance under AWGN-only propagation for $M=16$ and $H=10$ hops, as a function of coherence block length $N$: 
(a) MSE, (b) SER, and (c) BER. 
Results show that increasing $N$ improves detection accuracy by amortizing the constant CSI overhead, with AF--DDIM maintaining clear performance gains across metrics.}
    \label{fig:perf_vs_N_m16_h10}
\end{figure*}

Figure~\ref{fig:perf_vs_H_n64_m16} examines the effect of hop count $H$ under AWGN-only propagation for $M=16$ and block size $N=64$. 
As $H$ increases, the accumulated relay noise grows, leading to higher error rates in baseline detection. 
In contrast, AF--DDIM reconstruction remains stable: in Fig.~\ref{fig:mse_n64_m16}, the MSE penalty with increasing $H$ is significantly mitigated, 
and in Figs.~\ref{fig:ser_n64_m16}--\ref{fig:ber_n64_m16}, both SER and BER remain consistently lower than baseline across all $H$. 
These results demonstrate the robustness of the proposed framework to multi-hop noise accumulation.

\begin{figure*}[!t]
    \centering
    \subfloat[MSE]{%
        \includegraphics[width=0.32\textwidth]{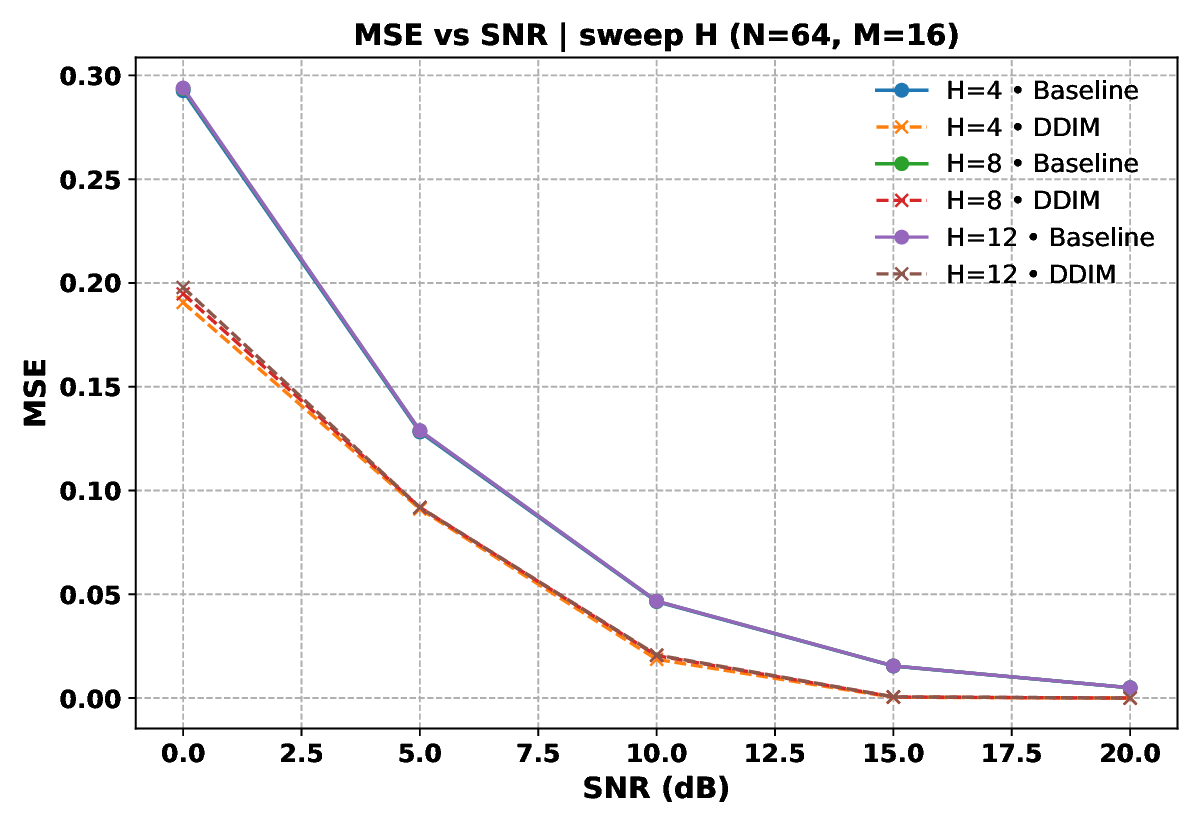}%
        \label{fig:mse_n64_m16}}
    \hfill
    \subfloat[SER]{%
        \includegraphics[width=0.32\textwidth]{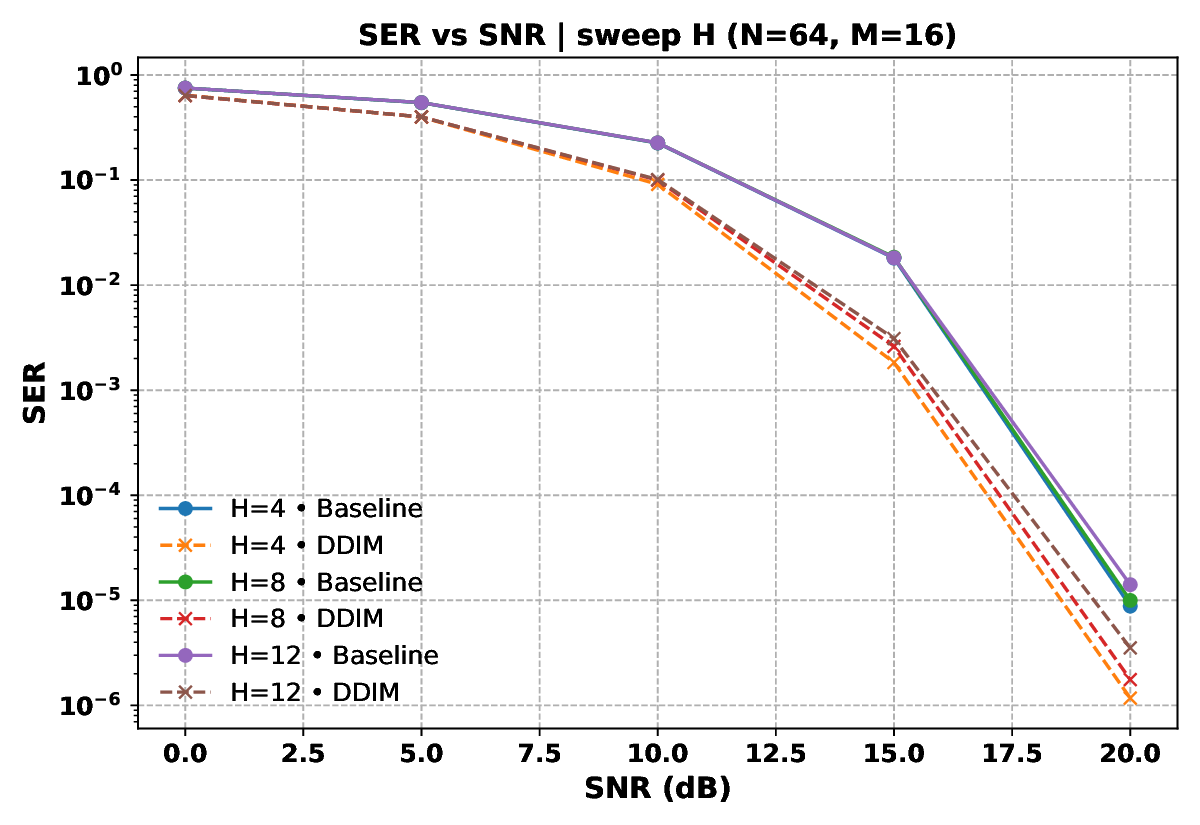}%
        \label{fig:ser_n64_m16}}
    \hfill
    \subfloat[BER]{%
        \includegraphics[width=0.32\textwidth]{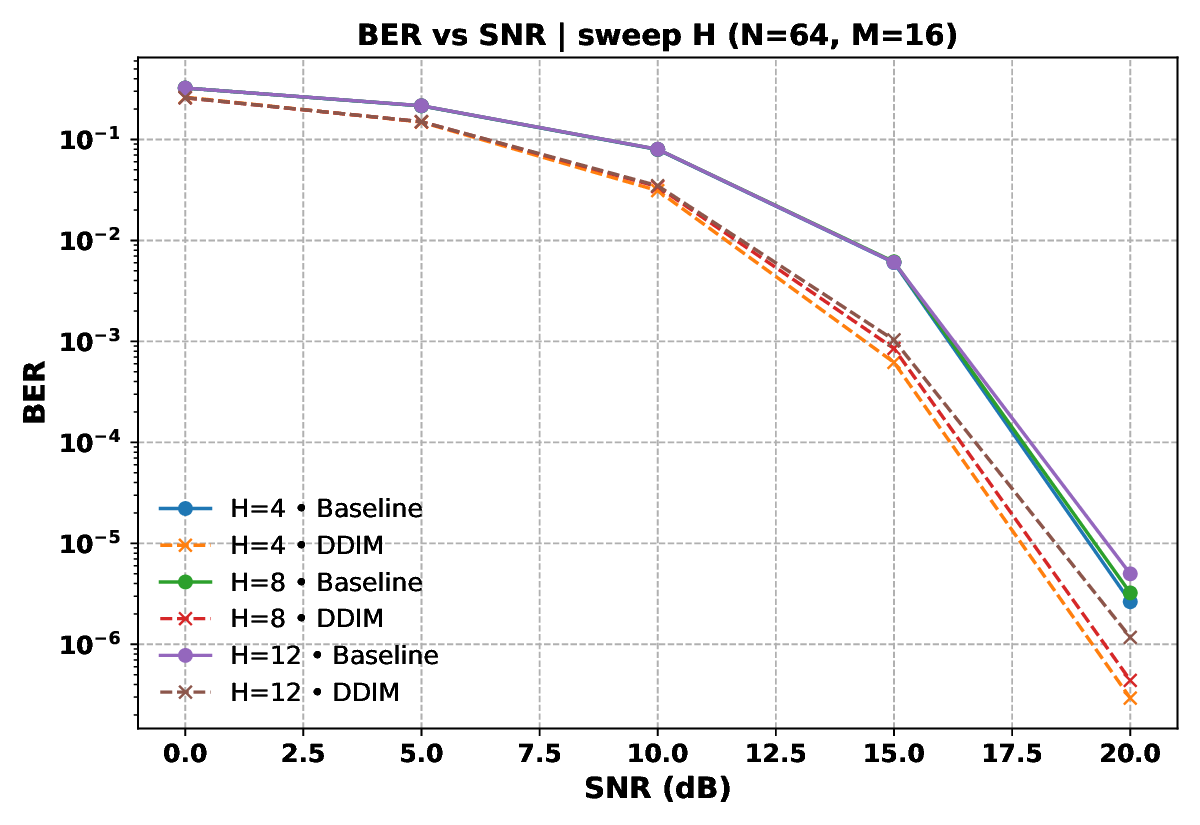}%
        \label{fig:ber_n64_m16}}
    \caption{Performance under AWGN-only propagation as a function of hop count $H$ for $M=16$ and $N=64$: 
(a) MSE, (b) SER, and (c) BER. 
Results confirm that AF--DDIM effectively suppresses the error growth caused by multi-hop noise accumulation.}
    \label{fig:perf_vs_H_n64_m16}
\end{figure*}

\subsection{Results}
We assess the performance of our proposed approach in two distinct propagation regimes. First, we examine the AWGN-only case, in which fading effects are absent and each relay hop introduces only additive Gaussian noise. This configuration isolates the fundamental behavior of AF–DDIM regarding noise accumulation in the absence of channel-state signaling.
Next, we consider a more practical scenario characterized by Rician fading ($K=15$dB), random hop distances, and significant large-scale path loss. In this setting, the sufficient statistics $(\mu_H, v_H)$ must be quantized and forwarded, allowing us to evaluate both the denoising performance of AF–DDIM under fading conditions and the impact of the low-complexity CSI model detailed in Section\ref{sec:implementation}.
Unless noted otherwise, we use a block length of $N^2=4096$ symbols, ten hops ($H=10$), and modulation orders $M\in{4,16,64}$ (square QAM). Source symbols are normalized to unit average power prior to transmission. For the fading regime, hop distances are sampled uniformly from $[1,2]$~m with a path-loss exponent of $2$ and a reference attenuation of $10$~dB at $1$~m. Independent realizations of fading and noise are generated for each Monte Carlo trial.

Training data are synthesized using the forward AF--DDIM process: for each sample, a random hop index $t\in\{1,\ldots,H\}$ is selected, 
a constellation symbol is propagated through $t$ AF stages with randomized fading, and the resulting noisy block and cumulative statistics are stored. 
The denoiser is trained offline with 10,000 samples for each configuration and validated using 400 samples. Training employs a batch size of 64 and runs for 5 epochs, utilizing the AdamW optimizer with a learning rate of $10^{-3}$. Performance is assessed on 400 independent test realizations per configuration.
We evaluate performance using mean-squared error (MSE), symbol error rate (SER), and bit error rate (BER), each analyzed as a function of SNR. Additionally, we assess the influence of block size ($N$), hop count ($H$), and modulation order ($M$) to demonstrate the robustness of AF–DDIM across different propagation depths and signaling overheads.

For comparison, we benchmark against a conventional AF detection pipeline that does not use the diffusion formulation. 
At the destination, the received signal is modeled as 
\[
Y = \bar\alpha_H X_0 + \sqrt{1-\bar\alpha_H^2}\, Z,
\qquad Z \sim \mathcal{CN}(0,I),
\]
and symbol-wise detection is performed using a classical maximum-likelihood (ML). 
For coded transmissions, the baseline employs a standard belief-propagation (BP) decoder operating on log-likelihood ratios derived from this effective channel. 
Unlike the proposed AF--DDIM method, the baseline does not explicitly track the sufficient statistics $(\mu_H,v_H)$ across hops, but instead relies only on the end-to-end effective SNR. 
This represents a natural and widely accepted reference design: it is near-optimal within the classical AF framework, yet does not leverage generative denoising priors or hop-wise statistics.

Figure~\ref{fig:perf_vs_M_n64_h10} illustrates performance under AWGN-only propagation for a block size of $N=64$ symbols and $H=10$ hops, comparing baseline direct detection with AF--DDIM reconstruction across different modulation orders. In Fig.~\ref{fig:mse_n64_h10}, the mean-squared error (MSE) is consistently reduced by AF--DDIM, with the most pronounced improvement at moderate SNR values (5--15~dB), where relay noise dominates. At high SNR, both approaches converge as the additive noise vanishes. In Fig.~\ref{fig:ser_n64_h10}, the symbol error rate (SER) decreases with SNR as expected, and AF--DDIM achieves substantially lower error rates, particularly for higher-order constellations ($M=16,64$). Figure~\ref{fig:ber_n64_h10} shows the corresponding bit error rate (BER) trends, which mirror the SER behavior: AF--DDIM preserves reliable decoding at SNR values where the baseline detector fails.

Figure~\ref{fig:perf_vs_M_n64_h10_real} considers a Rician fading environment ($K=15$~dB) with random hop distances and path-loss exponent $2$. 
For $N=64$ and $H=10$, AF--DDIM continues to provide a clear advantage across modulation orders. 
In Fig.~\ref{fig:mse_n64_h10_real}, MSE reduction is consistent despite fading-induced variability, 
and in Figs.~\ref{fig:ser_n64_h10_real}--\ref{fig:ber_n64_h10_real}, both SER and BER improvements are preserved for $M=16$ and $M=64$, 
highlighting that the matched diffusion schedule adapts well to fading statistics. 
The performance gap is most visible at moderate SNR, where fading and AWGN jointly degrade the baseline detector.

\begin{figure*}[!t]
    \centering
    \subfloat[MSE]{%
        \includegraphics[width=0.32\textwidth]{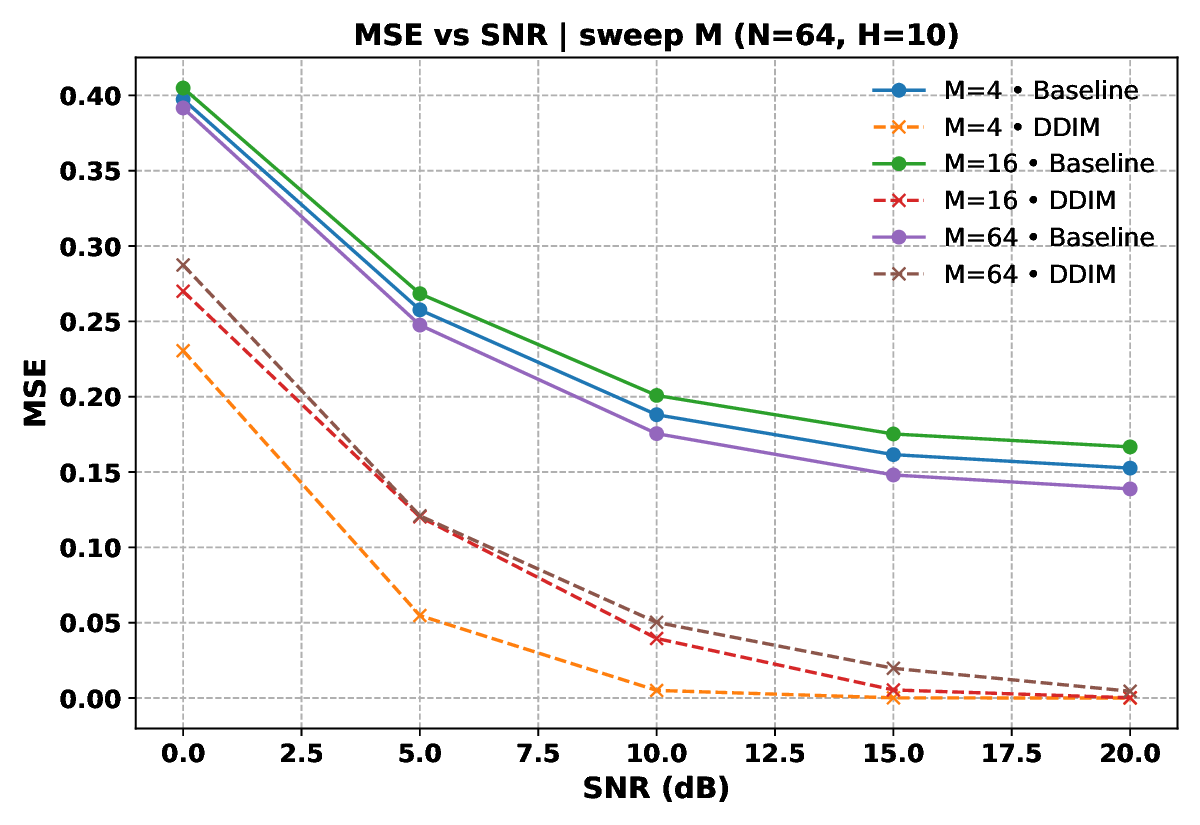}%
        \label{fig:mse_n64_h10_real}}
    \hfill
    \subfloat[SER]{%
        \includegraphics[width=0.32\textwidth]{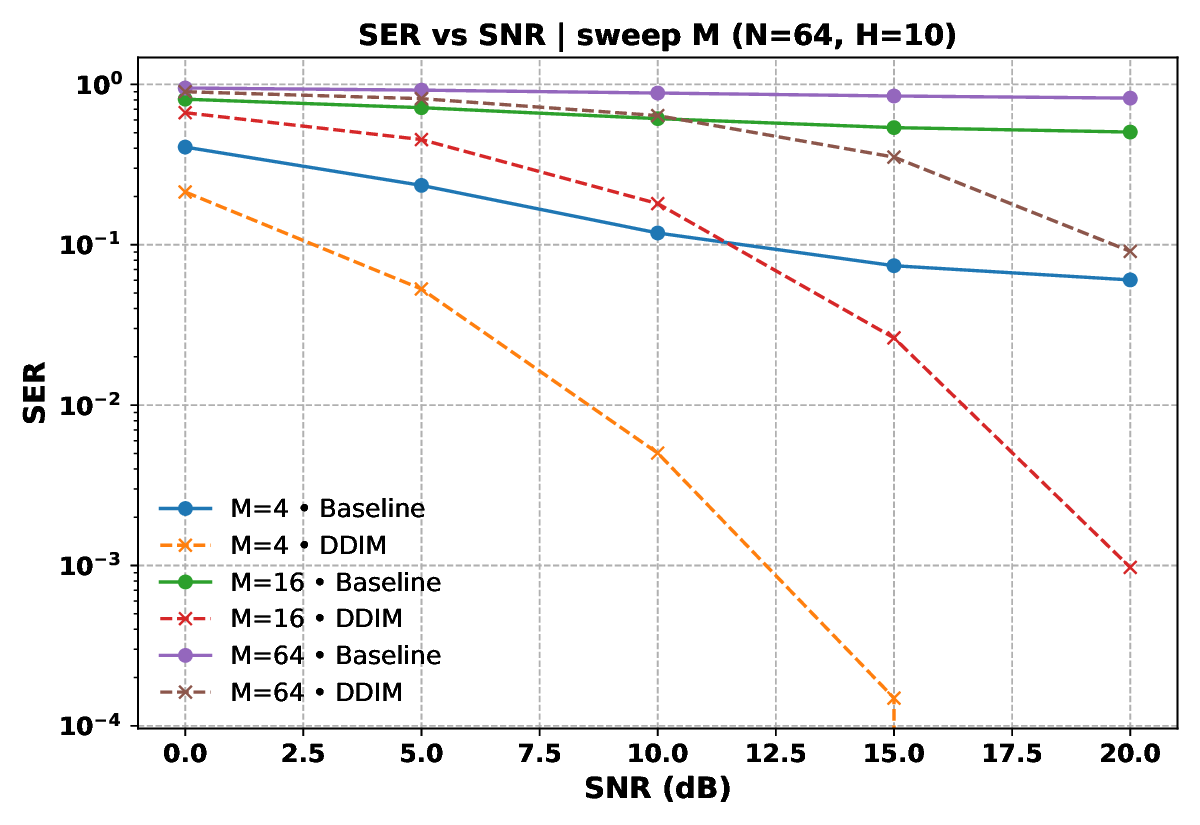}%
        \label{fig:ser_n64_h10_real}}
    \hfill
    \subfloat[BER]{%
        \includegraphics[width=0.32\textwidth]{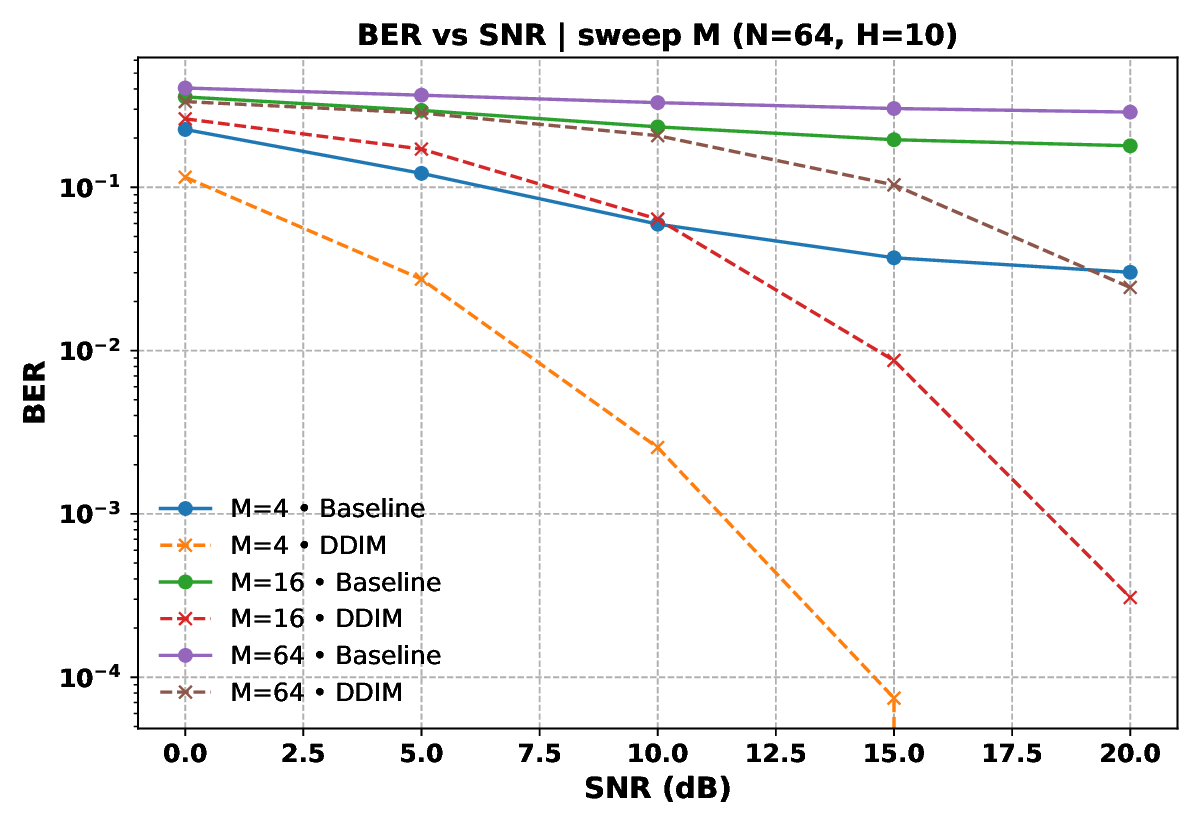}%
        \label{fig:ber_n64_h10_real}}
    \caption{Performance under Rician fading ($K=15$~dB, path-loss exponent 2, hop distances uniformly distributed in $[1,2]$\,m, reference path loss $10$~dB at $1$\,m) 
versus modulation order $M$ for $N=64$ and $H=10$: 
(a) MSE, (b) SER, and (c) BER. 
AF--DDIM achieves consistent error reduction across modulation orders despite fading variability.}
    \label{fig:perf_vs_M_n64_h10_real}
\end{figure*}

Figure~\ref{fig:perf_vs_N_m16_h10_real} studies the role of block size $N$ under the same Rician fading model with $M=16$ and $H=10$. 
Increasing $N$ improves utilization by amortizing CSI overhead, which translates into better denoising performance. 
AF--DDIM consistently achieves lower MSE (Fig.~\ref{fig:mse_m16_h10_real}) and reduced SER/BER (Figs.~\ref{fig:ser_m16_h10_real}--\ref{fig:ber_m16_h10_real}) across all $N$, 
showing that the framework remains effective even when fading is present. 
These results validate the low-overhead design, as reliable performance is reached with block sizes as small as $N \approx 20$.

\begin{figure*}[!t]
    \centering
    \subfloat[MSE]{%
        \includegraphics[width=0.32\textwidth]{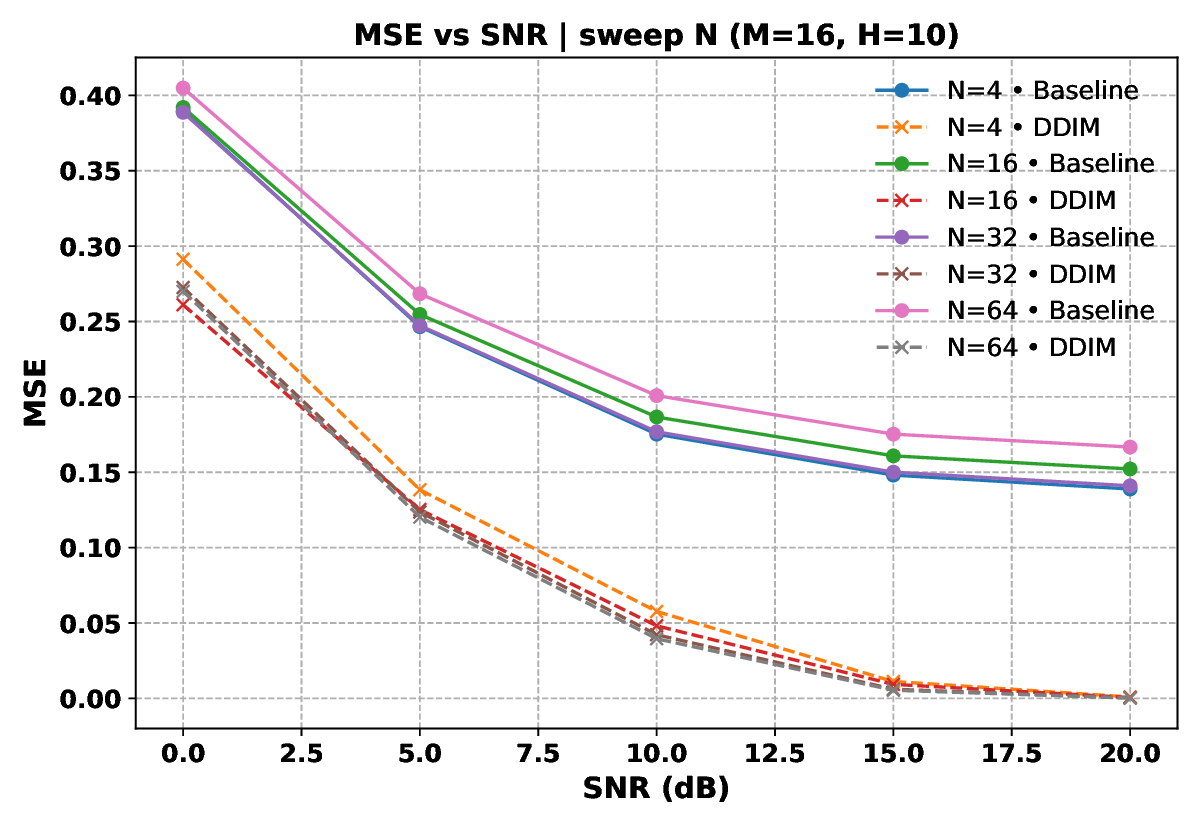}%
        \label{fig:mse_m16_h10_real}}
    \hfill
    \subfloat[SER]{%
        \includegraphics[width=0.32\textwidth]{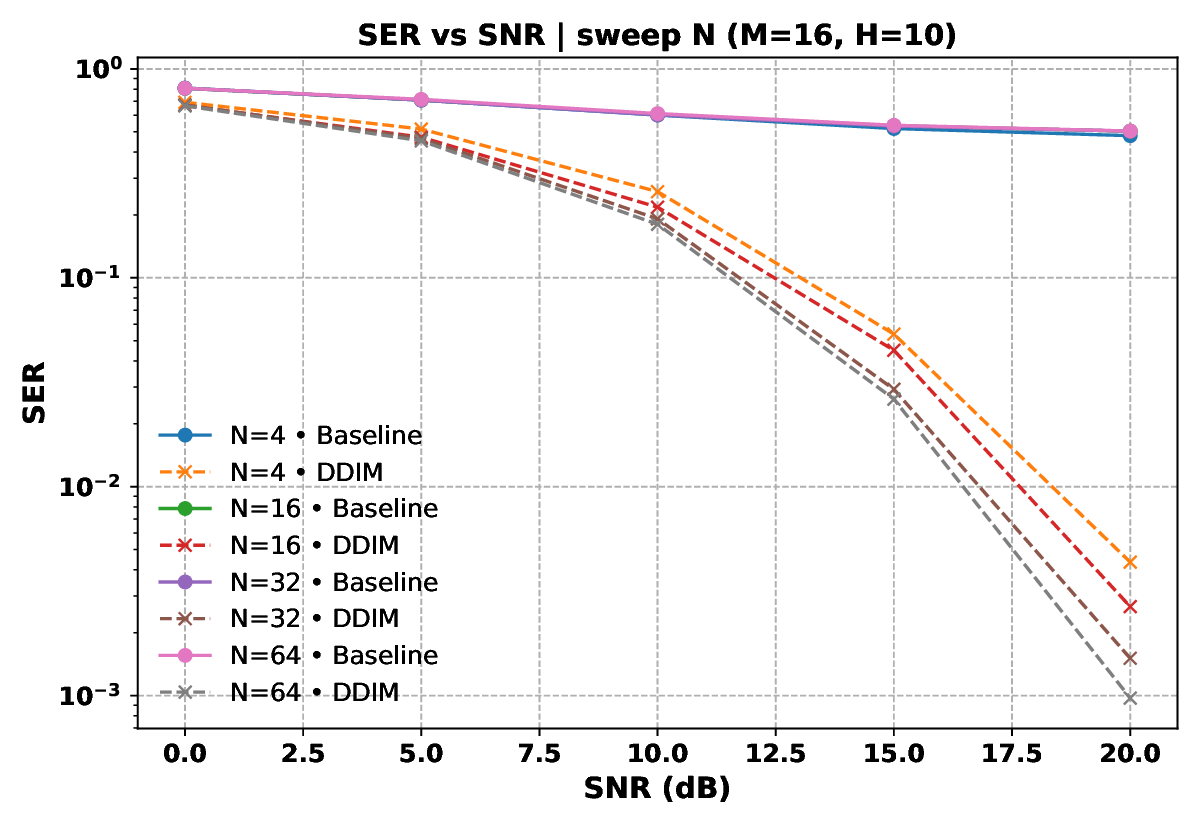}%
        \label{fig:ser_m16_h10_real}}
    \hfill
    \subfloat[BER]{%
        \includegraphics[width=0.32\textwidth]{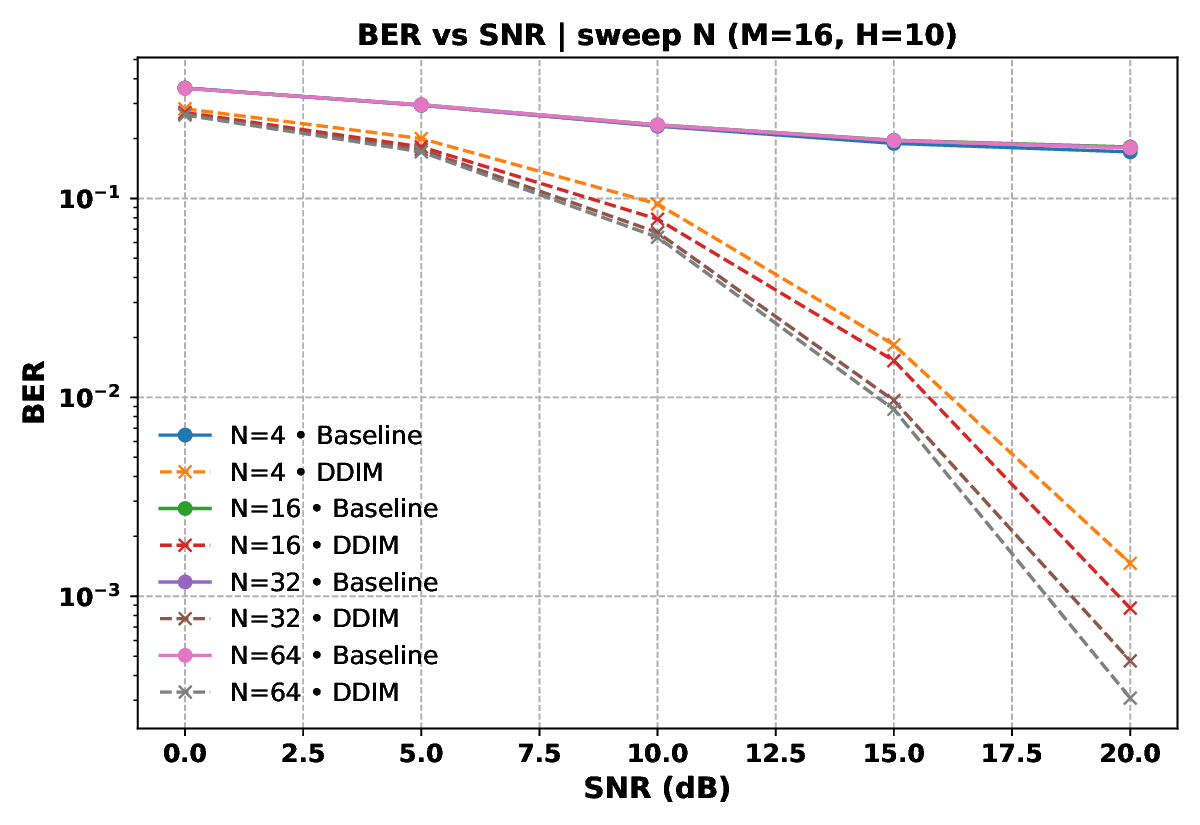}%
        \label{fig:ber_m16_h10_real}}
    \caption{Performance under Rician fading ($K=15$~dB, path-loss exponent 2, hop distances uniformly distributed in $[1,2]$\,m, reference path loss $10$~dB at $1$\,m) 
as a function of block size $N$ for $M=16$ and $H=10$: 
(a) MSE, (b) SER, and (c) BER. 
Larger $N$ improves utilization and enhances AF--DDIM performance, while clear denoising gains persist even for moderate block sizes.}
    \label{fig:perf_vs_N_m16_h10_real}
\end{figure*}

Figure~\ref{fig:perf_vs_H_n64_m16_real} evaluates sensitivity to hop count under Rician fading with $M=16$ and $N=64$. 
Baseline detection suffers severe degradation as $H$ increases, due to both noise amplification and fading compounding across hops. 
AF--DDIM substantially mitigates this effect: in Fig.~\ref{fig:mse_n64_m16_real}, the MSE remains well controlled, 
and in Figs.~\ref{fig:ser_n64_m16_real}--\ref{fig:ber_n64_m16_real}, SER and BER remain markedly lower than baseline across all $H$. 
These results confirm that the proposed framework scales gracefully with network depth, even under realistic fading propagation.

\begin{figure*}[!t]
    \centering
    \subfloat[MSE]{%
        \includegraphics[width=0.32\textwidth]{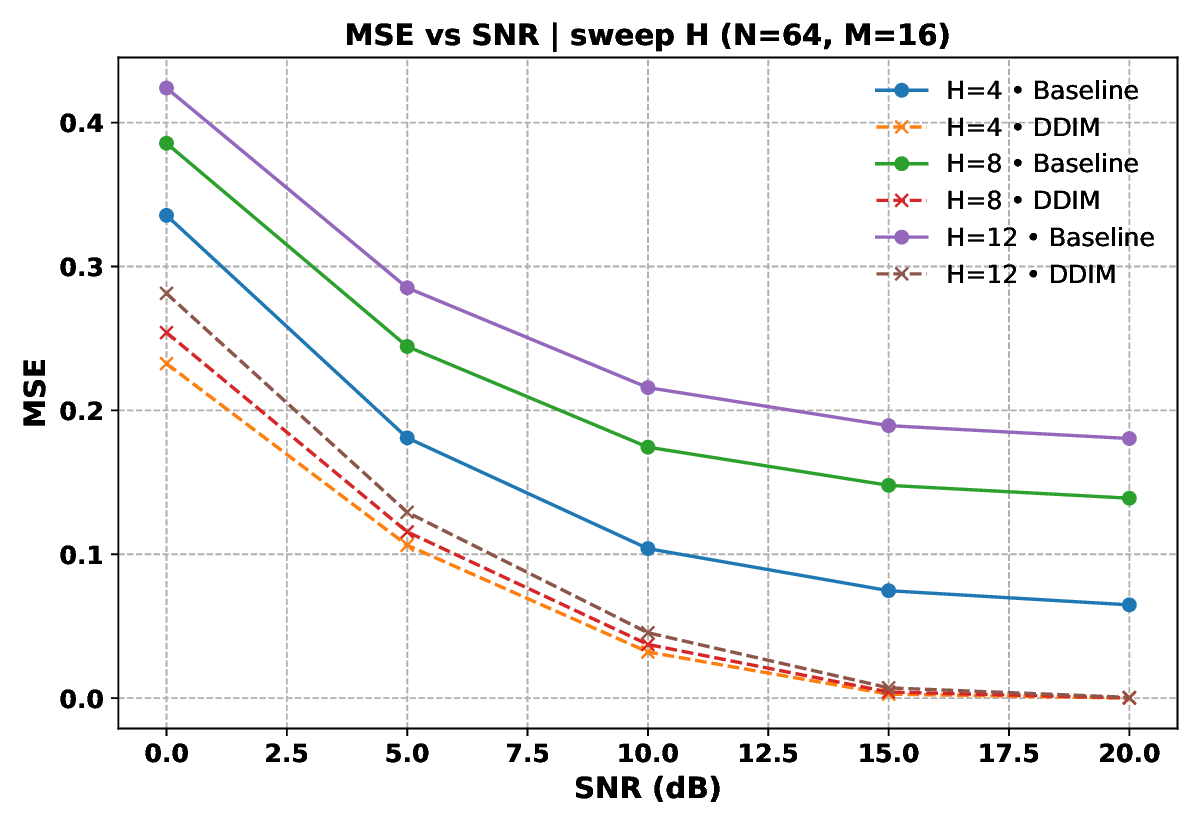}%
        \label{fig:mse_n64_m16_real}}
    \hfill
    \subfloat[SER]{%
        \includegraphics[width=0.32\textwidth]{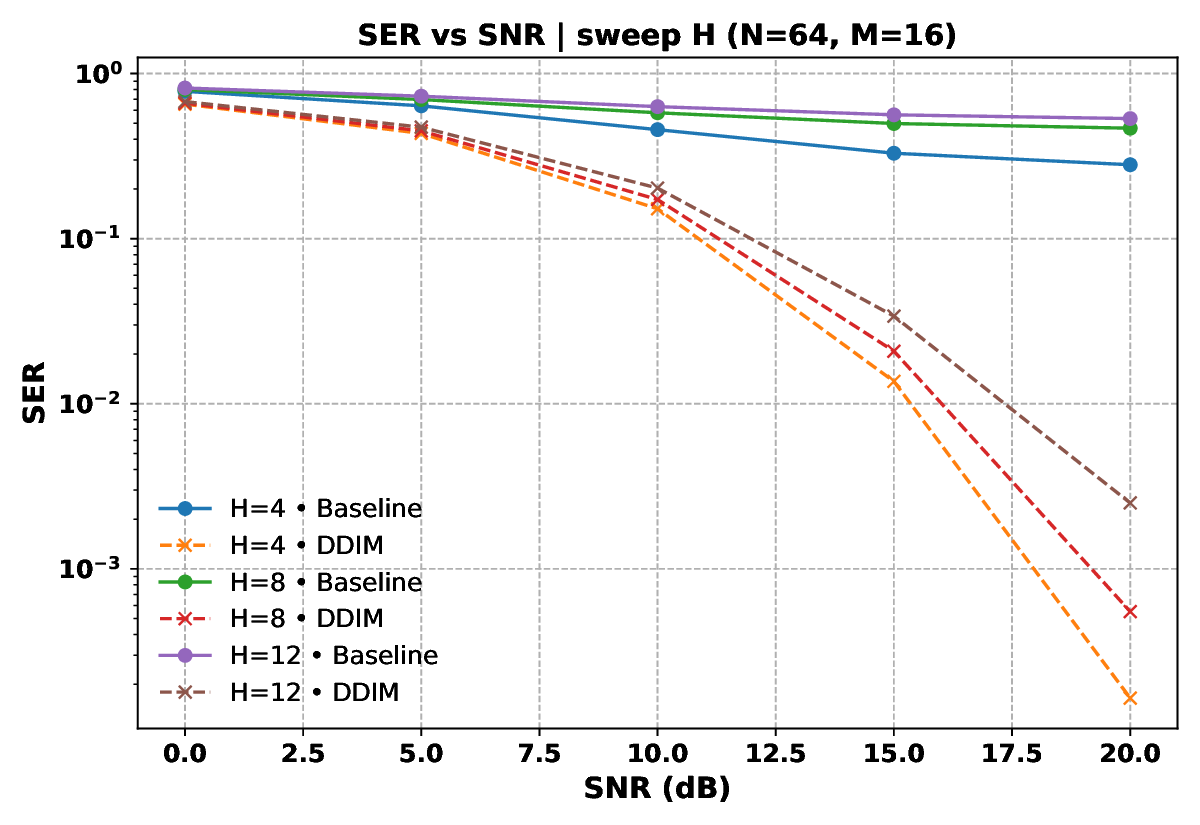}%
        \label{fig:ser_n64_m16_real}}
    \hfill
    \subfloat[BER]{%
        \includegraphics[width=0.32\textwidth]{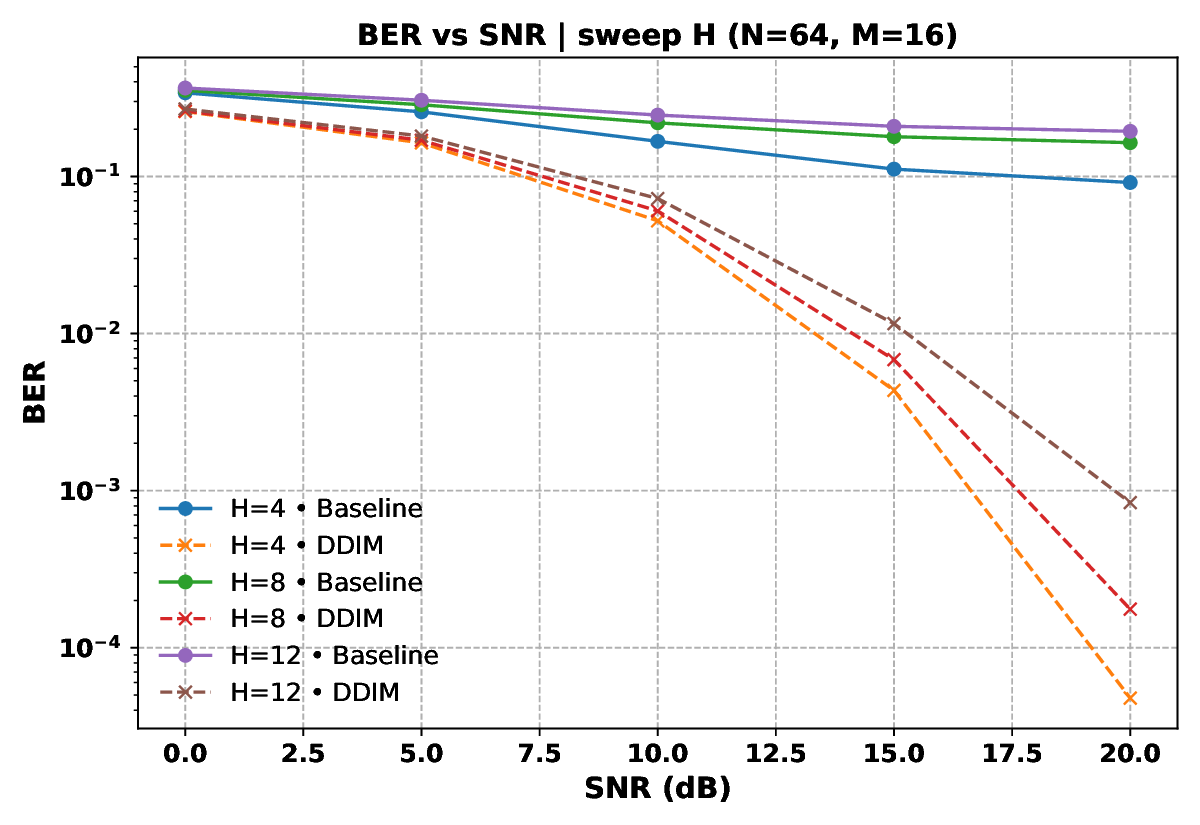}%
        \label{fig:ber_n64_m16_real}}
    \caption{Performance under Rician fading ($K=15$~dB, path-loss exponent 2, hop distances uniformly distributed in $[1,2]$\,m, reference path loss $10$~dB at $1$\,m) 
as a function of hop count $H$ for $M=16$ and $N=64$: 
(a) MSE, (b) SER, and (c) BER. 
AF--DDIM robustly controls error growth with increasing network depth, demonstrating resilience to fading and multi-hop noise accumulation.}
    \label{fig:perf_vs_H_n64_m16_real}
\end{figure*}

\section{Conclusions}
We have proposed a framework that reinterprets multi-hop amplify-and-forward relaying as a physical instantiation of a forward diffusion process. By collapsing per-hop channel effects into end-to-end sufficient statistics, the framework enables a low-overhead signaling model where only three scalars are forwarded per block. At the receiver, these statistics parameterize a matched reverse-time schedule that allows a DDIM-based denoiser to reconstruct the transmitted signal. 

Extensive numerical evaluations under AWGN and Rician fading channels confirm the effectiveness of AF--DDIM in mitigating multi-hop noise accumulation. The framework achieves substantial gains in MSE, SER, and BER compared to direct detection, with improvements most pronounced at moderate SNR values and for higher-order constellations. Importantly, the CSI overhead is shown to be negligible: even conservative quantization yields utilization above 95\% for modest block sizes, validating the efficiency of the proposed approach.

These findings highlight our approach as a scalable and physically interpretable framework for reliable sub-THz multi-hop communications. Future work may extend the methodology to frequency-selective channels, adaptive quantization of sufficient statistics, and joint training of denoisers across heterogeneous relay networks.

\balance

\bibliographystyle{IEEEtran} 
\bibliography{IEEEabrv, ref, references}
\end{document}